\definecolor{googleblue}{RGB}{34, 0, 204}
\tikzstyle{infpoint}=[regular polygon,regular polygon sides=3,draw,scale=0.75,inner sep=-0.5pt,minimum width=9mm,fill=white,regular polygon rotate=90]
\tikzstyle{infcopoint}=[regular polygon,regular polygon sides=3,draw,scale=0.75,inner sep=-0.5pt,minimum width=9mm,fill=white,regular polygon rotate=270]
\tikzstyle{fMeas}=[regular polygon,regular polygon sides=3,draw,scale=0.75,inner sep=-0.5pt,minimum width=4mm,fill=white,regular polygon rotate=0,line width=1pt]
\tikzstyle{fPrep}=[regular polygon,regular polygon sides=3,draw,scale=0.75,inner sep=-0.5pt,minimum width=4mm,fill=white,regular polygon rotate=180,line width=1pt]
\tikzstyle{fTransition}=[fill=white,draw, line width = 1pt,inner sep=0.6mm,font=\footnotesize,minimum height=2mm,minimum width=2mm]
 \tikzstyle{rightground}=[circuit ee IEC,thick,ground,rotate=0,xscale=2.5,yscale=2]
\tikzstyle{bwSpider}=[
 \tikzstyle{wbSpider}=[
\tikzstyle{qWire}=[line width = 1pt, color=quantumviolet]
\tikzstyle{pWire}=[line width = 1pt, color=red!60!black]
\tikzstyle{gWire}=[line width = 1.5pt, color=green!60!black]
\tikzstyle{cWire}=[color=quantumgray,thin]
\tikzstyle{env}=[copoint,regular polygon rotate=0,minimum width=0.2cm, fill=black]
\tikzstyle{probs}=[shape=semicircle,fill=white,draw=black,shape border rotate=180,minimum width=1.2cm]
\tikzstyle{every picture}=[baseline=-0.25em,scale=0.5]
\tikzstyle{dotpic}=[] 
\tikzstyle{diredges}=[every to/.style={diredge}]
\tikzstyle{math matrix}=[matrix of math nodes,left delimiter=(,right delimiter=),inner sep=2pt,column sep=1em,row sep=0.5em,nodes={inner sep=0pt},text height=1.5ex, text depth=0.25ex]
\tikzstyle{inline text}=[text height=1.5ex, text depth=0.25ex,yshift=0.5mm]
\tikzstyle{label}=[font=\footnotesize,text height=1.5ex, text depth=0.25ex,yshift=0.5mm]
\tikzstyle{left label}=[label,anchor=east,xshift=1.5mm]
\tikzstyle{right label}=[label,anchor=west,xshift=-1mm]
\tikzstyle{up label}=[label,anchor=south,yshift=-1mm]
\tikzstyle{braceedge}=[decorate,decoration={brace,amplitude=2mm,raise=-1mm}]
\tikzstyle{small braceedge}=[decorate,decoration={brace,amplitude=1mm,raise=-1mm}]
\tikzstyle{doubled}=[line width=1.6pt] 
\tikzstyle{boldedge}=[doubled,shorten <=-0.17mm,shorten >=-0.17mm]
\tikzstyle{boldedgegray}=[doubled,gray,shorten <=-0.17mm,shorten >=-0.17mm]
\tikzstyle{singleedgegray}=[gray]
\tikzstyle{semidoubled}=[line width=1.4pt] 
\tikzstyle{semiboldedgegray}=[semidoubled,gray,shorten <=-0.17mm,shorten >=-0.17mm]
\tikzstyle{boxedge}=[semiboldedgegray]
\tikzstyle{boldedgedashed}=[very thick,dashed,shorten <=-0.17mm,shorten >=-0.17mm]
\tikzstyle{vboldedgedashed}=[doubled,dashed,shorten <=-0.17mm,shorten >=-0.17mm]
\tikzstyle{left hook arrow}=[left hook-latex]
\tikzstyle{right hook arrow}=[right hook-latex]
\tikzstyle{sembracket}=[line width=0.5pt,shorten <=-0.07mm,shorten >=-0.07mm]
\tikzstyle{causal edge}=[->,thick,gray]
\tikzstyle{causal nondir}=[thick,gray]
\tikzstyle{timeline}=[thick,gray, dashed]
\tikzstyle{cedge}=[<->,thick,gray!70!white]
\tikzstyle{empty diagram}=[draw=gray!40!white,dashed,shape=rectangle,minimum width=1cm,minimum height=1cm]
\tikzstyle{empty diagram small}=[draw=gray!50!white,dashed,shape=rectangle,minimum width=0.6cm,minimum height=0.5cm]
\tikzstyle{dot}=[inner sep=0mm,minimum width=2mm,minimum height=2mm,draw,shape=circle]
\tikzstyle{leak}=[white dot, shape=regular polygon, minimum size=3.3 mm, regular polygon sides=3, outer sep=-0.2mm, regular polygon rotate=270]
\tikzstyle{proj}=[draw,fill=white,chamfered rectangle,chamfered rectangle angle=30, minimum width=2mm, minimum height= 1mm,scale=0.5, outer sep=-0.2mm]
\tikzstyle{wide proj}=[draw,fill=white,chamfered rectangle,chamfered rectangle angle=30, minimum width=15mm,minimum height=1mm,scale=0.5, outer sep=-0.2mm]
\tikzstyle{very wide proj}=[draw,fill=white,chamfered rectangle,chamfered rectangle angle=30, minimum width=25mm,minimum height=1mm,scale=0.5, outer sep=-0.2mm]
\tikzstyle{very very wide proj}=[draw,fill=white,chamfered rectangle,chamfered rectangle angle=30, minimum width=35mm,minimum height=1mm,scale=0.5, outer sep=-0.2mm]
\tikzstyle{preleak}=[proj]
\tikzstyle{split proj out}=[regular polygon,regular polygon sides=3,draw,scale=0.75,inner sep=-0.5pt,minimum width=3.3mm,fill=white,regular polygon rotate=180]
\tikzstyle{split proj in}=[regular polygon,regular polygon sides=3,draw,scale=0.75,inner sep=-0.5pt,minimum width=3.3mm,fill=white]
\tikzstyle{Vleak}=[white dot, shape=regular polygon, minimum size=3.3 mm, regular polygon sides=3, outer sep=-0.2mm, regular polygon rotate=90]
\tikzstyle{dleak}=[white dot, line width=1.6pt, shape=regular polygon, minimum size=3.3 mm, regular polygon sides=3, outer sep=-0.2mm, regular polygon rotate=270]
\tikzstyle{Wsquare}=[white dot, shape=regular polygon, rounded corners=0.8 mm, minimum size=3.3 mm, regular polygon sides=3, outer sep=-0.2mm]
\tikzstyle{Wsquareadj}=[white dot, shape=regular polygon, rounded corners=0.8 mm, minimum size=3.3 mm, regular polygon sides=3, outer sep=-0.2mm, regular polygon rotate=180]
\tikzstyle{ddot}=[inner sep=0mm, doubled, minimum width=2.5mm,minimum height=2.5mm,draw,shape=circle]
\tikzstyle{black dot}=[dot,fill=black]
\tikzstyle{white dot}=[dot,fill=white,,text depth=-0.2mm]
\tikzstyle{white Wsquare}=[Wsquare,fill=gray,,text depth=-0.2mm]
\tikzstyle{white Wsquareadj}=[Wsquareadj,fill=white,,text depth=-0.2mm]
\tikzstyle{green dot}=[white dot] 
\tikzstyle{gray dot}=[dot,fill=gray!40!white,,text depth=-0.2mm]
\tikzstyle{red dot}=[gray dot] 
\tikzstyle{black ddot}=[ddot,fill=black]
\tikzstyle{white ddot}=[ddot,fill=white]
\tikzstyle{gray ddot}=[ddot,fill=gray!40!white]
\tikzstyle{gray edge}=[gray!60!white]
\tikzstyle{small dot}=[inner sep=0.5mm,minimum width=0pt,minimum height=0pt,draw,shape=circle]
\tikzstyle{small black dot}=[small dot,fill=black]
\tikzstyle{small white dot}=[small dot,fill=white]
\tikzstyle{small gray dot}=[small dot,fill=gray!40!white]
\tikzstyle{causal dot}=[inner sep=0.4mm,minimum width=0pt,minimum height=0pt,draw=white,shape=circle,fill=gray!40!white]
\tikzstyle{phase dimensions}=[minimum size=5mm,font=\footnotesize,rectangle,rounded corners=2.5mm,inner sep=0.2mm,outer sep=-2mm]
\tikzstyle{dphase dimensions}=[minimum size=5mm,font=\footnotesize,rectangle,rounded corners=2.5mm,inner sep=0.2mm,outer sep=-2mm]
\tikzstyle{white phase dot}=[dot,fill=white,phase dimensions]
\tikzstyle{white phase ddot}=[ddot,fill=white,dphase dimensions]
\tikzstyle{white rect ddot}=[draw=black,fill=white,doubled,minimum size=5mm,font=\footnotesize,rectangle,rounded corners=2.5mm,inner sep=0.2mm]
\tikzstyle{gray rect ddot}=[draw=black,fill=gray!40!white,doubled,minimum size=6mm,font=\footnotesize,rectangle,rounded corners=3mm]
\tikzstyle{gray phase dot}=[dot,fill=gray!40!white,phase dimensions]
\tikzstyle{gray phase ddot}=[ddot,fill=gray!40!white,dphase dimensions]
\tikzstyle{grey phase dot}=[gray phase dot]
\tikzstyle{grey phase ddot}=[gray phase ddot]
\tikzstyle{small phase dimensions}=[minimum size=4mm,font=\tiny,rectangle,rounded corners=2mm,inner sep=0.2mm,outer sep=-2mm]
\tikzstyle{small dphase dimensions}=[minimum size=4mm,font=\tiny,rectangle,rounded corners=2mm,inner sep=0.2mm,outer sep=-2mm]
\tikzstyle{small gray phase dot}=[dot,fill=gray!40!white,small phase dimensions]
\tikzstyle{small gray phase ddot}=[ddot,fill=gray!40!white,small dphase dimensions]
\tikzstyle{small map}=[draw,shape=rectangle,minimum height=4mm,minimum width=4mm,fill=white]
\tikzstyle{cnot}=[fill=white,shape=circle,inner sep=-1.4pt]
\tikzstyle{asym hadamard}=[fill=white,draw,shape=NEbox,inner sep=0.6mm,font=\footnotesize,minimum height=4mm]
\tikzstyle{asym hadamard conj}=[fill=white,draw,shape=NWbox,inner sep=0.6mm,font=\footnotesize,minimum height=4mm]
\tikzstyle{asym hadamard dag}=[fill=white,draw,shape=SEbox,inner sep=0.6mm,font=\footnotesize,minimum height=4mm]
\tikzstyle{hadamard}=[fill=white,draw,inner sep=0.6mm,font=\footnotesize,minimum height=4mm,minimum width=4mm]
\tikzstyle{small hadamard}=[fill=white,draw,inner sep=0.6mm,minimum height=1.5mm,minimum width=1.5mm]
\tikzstyle{small hadamard rotate}=[small hadamard,rotate=45]
\tikzstyle{dhadamard}=[hadamard,doubled]
\tikzstyle{small dhadamard}=[small hadamard,doubled]
\tikzstyle{small dhadamard rotate}=[small hadamard rotate,doubled]
\tikzstyle{antipode}=[white dot,inner sep=0.3mm,font=\footnotesize]
\tikzstyle{scalar}=[diamond,draw,inner sep=0.5pt,font=\small]
\tikzstyle{dscalar}=[diamond,doubled, draw,inner sep=0.5pt,font=\small]
\tikzstyle{small box}=[rectangle,inline text,fill=white,draw,minimum height=5mm,yshift=-0.5mm,minimum width=5mm,font=\small]
\tikzstyle{small gray box}=[small box,fill=gray!30]
\tikzstyle{medium box}=[rectangle,inline text,fill=white,draw,minimum height=5mm,yshift=-0.5mm,minimum width=10mm,font=\small]
\tikzstyle{square box}=[small box] 
\tikzstyle{medium gray box}=[small box,fill=gray!30]
\tikzstyle{semilarge box}=[rectangle,inline text,fill=white,draw,minimum height=5mm,yshift=-0.5mm,minimum width=12.5mm,font=\small]
\tikzstyle{large box}=[rectangle,inline text,fill=white,draw,minimum height=5mm,yshift=-0.5mm,minimum width=15mm,font=\small]
\tikzstyle{large gray box}=[small box,fill=gray!30]
\tikzstyle{Bayes box}=[rectangle,fill=black,draw, minimum height=3mm, minimum width=3mm]
\tikzstyle{gray square point}=[small box,fill=gray!50]
\tikzstyle{dphase box white}=[dhadamard]
\tikzstyle{dphase box gray}=[dhadamard,fill=gray!50!white]
\tikzstyle{phase box white}=[hadamard]
\tikzstyle{phase box gray}=[hadamard,fill=gray!50!white]
\tikzstyle{point}=[regular polygon,regular polygon sides=3,draw,scale=0.75,inner sep=-0.5pt,minimum width=9mm,fill=white,regular polygon rotate=180]
\tikzstyle{point nosep}=[regular polygon,regular polygon sides=3,draw,scale=0.75,inner sep=-2pt,minimum width=9mm,fill=white,regular polygon rotate=180]
\tikzstyle{copoint}=[regular polygon,regular polygon sides=3,draw,scale=0.75,inner sep=-0.5pt,minimum width=9mm,fill=white]
\tikzstyle{dpoint}=[point,doubled]
\tikzstyle{dcopoint}=[copoint,doubled]
\tikzstyle{pointgrow}=[shape=cornerpoint,kpoint common,scale=0.75,inner sep=3pt]
\tikzstyle{pointgrow dag}=[shape=cornercopoint,kpoint common,scale=0.75,inner sep=3pt]
\tikzstyle{wide copoint}=[fill=white,draw,shape=isosceles triangle,shape border rotate=90,isosceles triangle stretches=true,inner sep=0pt,minimum width=1.5cm,minimum height=6.12mm]
\tikzstyle{wide point}=[fill=white,draw,shape=isosceles triangle,shape border rotate=-90,isosceles triangle stretches=true,inner sep=0pt,minimum width=1.5cm,minimum height=6.12mm,yshift=-0.0mm]
\tikzstyle{wide point plus}=[fill=white,draw,shape=isosceles triangle,shape border rotate=-90,isosceles triangle stretches=true,inner sep=0pt,minimum width=1.74cm,minimum height=7mm,yshift=-0.0mm]
\tikzstyle{wide dpoint}=[fill=white,doubled,draw,shape=isosceles triangle,shape border rotate=-90,isosceles triangle stretches=true,inner sep=0pt,minimum width=1.5cm,minimum height=6.12mm,yshift=-0.0mm]
\tikzstyle{tinypoint}=[regular polygon,regular polygon sides=3,draw,scale=0.55,inner sep=-0.15pt,minimum width=6mm,fill=white,regular polygon rotate=180]
\tikzstyle{white point}=[point]
\tikzstyle{white dpoint}=[dpoint]
\tikzstyle{green point}=[white point] 
\tikzstyle{white copoint}=[copoint]
\tikzstyle{gray point}=[point,fill=gray!40!white]
\tikzstyle{gray dpoint}=[gray point,doubled]
\tikzstyle{red point}=[gray point] 
\tikzstyle{gray copoint}=[copoint,fill=gray!40!white]
\tikzstyle{gray dcopoint}=[gray copoint,doubled]
\tikzstyle{white point guide}=[regular polygon,regular polygon sides=3,font=\scriptsize,draw,scale=0.65,inner sep=-0.5pt,minimum width=9mm,fill=white,regular polygon rotate=180]
\tikzstyle{black point}=[point,fill=black,font=\color{white}]
\tikzstyle{black copoint}=[copoint,fill=black,font=\color{white}]
\tikzstyle{tiny gray point}=[tinypoint,fill=gray!40!white]
\tikzstyle{diredge}=[->]
\tikzstyle{ddiredge}=[<->]
\tikzstyle{rdiredge}=[<-]
\tikzstyle{thickdiredge}=[->, very thick]
\tikzstyle{pointer edge}=[->,very thick,gray]
\tikzstyle{pointer edge part}=[very thick,gray]
\tikzstyle{dashed edge}=[dashed]
\tikzstyle{thick dashed edge}=[very thick,dashed]
\tikzstyle{thick gray dashed edge}=[thick dashed edge,gray!40]
\tikzstyle{thick map edge}=[very thick,|->]
\newcommand{\boxshape}[3]{%
\pgfdeclareshape{#1}{
\inheritsavedanchors[from=rectangle] 
\inheritanchorborder[from=rectangle]
\inheritanchor[from=rectangle]{center}
\inheritanchor[from=rectangle]{north}
\inheritanchor[from=rectangle]{south}
\inheritanchor[from=rectangle]{west}
\inheritanchor[from=rectangle]{east}
\backgroundpath{
\southwest \pgf@xa=\pgf@x \pgf@ya=\pgf@y
\northeast \pgf@xb=\pgf@x \pgf@yb=\pgf@y

\@tempdima=#2
\@tempdimb=#3

\pgfpathmoveto{\pgfpoint{\pgf@xa - 5pt + \@tempdima}{\pgf@ya}}
\pgfpathlineto{\pgfpoint{\pgf@xa - 5pt - \@tempdima}{\pgf@yb}}
\pgfpathlineto{\pgfpoint{\pgf@xb + 5pt + \@tempdimb}{\pgf@yb}}
\pgfpathlineto{\pgfpoint{\pgf@xb + 5pt - \@tempdimb}{\pgf@ya}}
\pgfpathlineto{\pgfpoint{\pgf@xa - 5pt + \@tempdima}{\pgf@ya}}
\pgfpathclose
}
}}
\tikzstyle{cloud}=[shape=cloud,draw,minimum width=1.5cm,minimum height=1.5cm]
\tikzstyle{map}=[draw,shape=NEbox,inner sep=1pt,minimum height=4mm,fill=white]
\tikzstyle{dashedmap}=[draw,dashed,shape=NEbox,inner sep=2pt,minimum height=6mm,fill=white]
\tikzstyle{mapdag}=[draw,shape=SEbox,inner sep=1pt,minimum height=4mm,fill=white]
\tikzstyle{mapadj}=[draw,shape=SEbox,inner sep=2pt,minimum height=6mm,fill=white]
\tikzstyle{maptrans}=[draw,shape=SWbox,inner sep=2pt,minimum height=6mm,fill=white]
\tikzstyle{mapconj}=[draw,shape=NWbox,inner sep=2pt,minimum height=6mm,fill=white]
\tikzstyle{medium map}=[draw,shape=NEbox,inner sep=2pt,minimum height=6mm,fill=white,minimum width=7mm]
\tikzstyle{medium map dag}=[draw,shape=SEbox,inner sep=2pt,minimum height=6mm,fill=white,minimum width=7mm]
\tikzstyle{medium map adj}=[draw,shape=SEbox,inner sep=2pt,minimum height=6mm,fill=white,minimum width=7mm]
\tikzstyle{medium map trans}=[draw,shape=SWbox,inner sep=2pt,minimum height=6mm,fill=white,minimum width=7mm]
\tikzstyle{medium map conj}=[draw,shape=NWbox,inner sep=2pt,minimum height=6mm,fill=white,minimum width=7mm]
\tikzstyle{semilarge map}=[draw,shape=NEbox,inner sep=2pt,minimum height=6mm,fill=white,minimum width=9.5mm]
\tikzstyle{semilarge map trans}=[draw,shape=SWbox,inner sep=2pt,minimum height=6mm,fill=white,minimum width=9.5mm]
\tikzstyle{semilarge map adj}=[draw,shape=SEbox,inner sep=2pt,minimum height=6mm,fill=white,minimum width=9.5mm]
\tikzstyle{semilarge map dag}=[draw,shape=SEbox,inner sep=2pt,minimum height=6mm,fill=white,minimum width=9.5mm]
\tikzstyle{semilarge map conj}=[draw,shape=NWbox,inner sep=2pt,minimum height=6mm,fill=white,minimum width=9.5mm]
\tikzstyle{large map}=[draw,shape=NEbox,inner sep=2pt,minimum height=6mm,fill=white,minimum width=12mm]
\tikzstyle{large map conj}=[draw,shape=NWbox,inner sep=2pt,minimum height=6mm,fill=white,minimum width=12mm]
\tikzstyle{very large map}=[draw,shape=NEbox,inner sep=2pt,minimum height=6mm,fill=white,minimum width=17mm]
\tikzstyle{medium dmap}=[draw,doubled,shape=NEbox,inner sep=2pt,minimum height=6mm,fill=white,minimum width=7mm]
\tikzstyle{medium dmap dag}=[draw,doubled,shape=SEbox,inner sep=2pt,minimum height=6mm,fill=white,minimum width=7mm]
\tikzstyle{medium dmap adj}=[draw,doubled,shape=SEbox,inner sep=2pt,minimum height=6mm,fill=white,minimum width=7mm]
\tikzstyle{medium dmap trans}=[draw,doubled,shape=SWbox,inner sep=2pt,minimum height=6mm,fill=white,minimum width=7mm]
\tikzstyle{medium dmap conj}=[draw,doubled,shape=NWbox,inner sep=2pt,minimum height=6mm,fill=white,minimum width=7mm]
\tikzstyle{semilarge dmap}=[draw,doubled,shape=NEbox,inner sep=2pt,minimum height=6mm,fill=white,minimum width=9.5mm]
\tikzstyle{semilarge dmap trans}=[draw,doubled,shape=SWbox,inner sep=2pt,minimum height=6mm,fill=white,minimum width=9.5mm]
\tikzstyle{semilarge dmap adj}=[draw,doubled,shape=SEbox,inner sep=2pt,minimum height=6mm,fill=white,minimum width=9.5mm]
\tikzstyle{semilarge dmap dag}=[draw,doubled,shape=SEbox,inner sep=2pt,minimum height=6mm,fill=white,minimum width=9.5mm]
\tikzstyle{semilarge dmap conj}=[draw,doubled,shape=NWbox,inner sep=2pt,minimum height=6mm,fill=white,minimum width=9.5mm]
\tikzstyle{large dmap}=[draw,doubled,shape=NEbox,inner sep=2pt,minimum height=6mm,fill=white,minimum width=12mm]
\tikzstyle{large dmap conj}=[draw,doubled,shape=NWbox,inner sep=2pt,minimum height=6mm,fill=white,minimum width=12mm]
\tikzstyle{large dmap trans}=[draw,doubled,shape=SWbox,inner sep=2pt,minimum height=6mm,fill=white,minimum width=12mm]
\tikzstyle{large dmap adj}=[draw,doubled,shape=SEbox,inner sep=2pt,minimum height=6mm,fill=white,minimum width=12mm]
\tikzstyle{large dmap dag}=[draw,doubled,shape=SEbox,inner sep=2pt,minimum height=6mm,fill=white,minimum width=12mm]
\tikzstyle{very large dmap}=[draw,doubled,shape=NEbox,inner sep=2pt,minimum height=6mm,fill=white,minimum width=19.5mm]
\tikzstyle{muxbox}=[draw,shape=rectangle,minimum height=3mm,minimum width=3mm,fill=white]
\tikzstyle{dmuxbox}=[muxbox,doubled]
\tikzstyle{box}=[draw,shape=rectangle,inner sep=2pt,minimum height=6mm,minimum width=6mm,fill=white]
\tikzstyle{dbox}=[draw,doubled,shape=rectangle,inner sep=2pt,minimum height=6mm,minimum width=6mm,fill=white]
\tikzstyle{dmap}=[draw,doubled,shape=NEbox,inner sep=2pt,minimum height=6mm,fill=white]
\tikzstyle{dmapdag}=[draw,doubled,shape=SEbox,inner sep=2pt,minimum height=6mm,fill=white]
\tikzstyle{dmapadj}=[draw,doubled,shape=SEbox,inner sep=2pt,minimum height=6mm,fill=white]
\tikzstyle{dmaptrans}=[draw,doubled,shape=SWbox,inner sep=2pt,minimum height=6mm,fill=white]
\tikzstyle{dmapconj}=[draw,doubled,shape=NWbox,inner sep=2pt,minimum height=6mm,fill=white]
\tikzstyle{ddmap}=[draw,doubled,dashed,shape=NEbox,inner sep=2pt,minimum height=6mm,fill=white]
\tikzstyle{ddmapdag}=[draw,doubled,dashed,shape=SEbox,inner sep=2pt,minimum height=6mm,fill=white]
\tikzstyle{ddmapadj}=[draw,doubled,dashed,shape=SEbox,inner sep=2pt,minimum height=6mm,fill=white]
\tikzstyle{ddmaptrans}=[draw,doubled,dashed,shape=SWbox,inner sep=2pt,minimum height=6mm,fill=white]
\tikzstyle{ddmapconj}=[draw,doubled,dashed,shape=NWbox,inner sep=2pt,minimum height=6mm,fill=white]
\tikzstyle{smap}=[draw,shape=sNEbox,fill=white]
\tikzstyle{smapdag}=[draw,shape=sSEbox,fill=white]
\tikzstyle{smapadj}=[draw,shape=sSEbox,fill=white]
\tikzstyle{smaptrans}=[draw,shape=sSWbox,fill=white]
\tikzstyle{smapconj}=[draw,shape=sNWbox,fill=white]
\tikzstyle{dsmap}=[draw,dashed,shape=sNEbox,fill=white]
\tikzstyle{dsmapdag}=[draw,dashed,shape=sSEbox,fill=white]
\tikzstyle{dsmaptrans}=[draw,dashed,shape=sSWbox,fill=white]
\tikzstyle{dsmapconj}=[draw,dashed,shape=sNWbox,fill=white]
\tikzstyle{mmap}=[draw,shape=mNEbox]
\tikzstyle{mmapdag}=[draw,shape=mSEbox]
\tikzstyle{mmaptrans}=[draw,shape=mSWbox]
\tikzstyle{mmapconj}=[draw,shape=mNWbox]
\tikzstyle{mmapgray}=[draw,fill=gray!40!white,shape=mNEbox]
\tikzstyle{smapgray}=[draw,fill=gray!40!white,shape=sNEbox]
\pgfmathsetmacro{\pgf@shorten@left}{\pgfkeysvalueof{/tikz/shorten left}}
\pgfmathsetmacro{\pgf@shorten@right}{\pgfkeysvalueof{/tikz/shorten right}}
\pgfmathsetmacro{\pgf@shorten@left}{\pgfkeysvalueof{/tikz/shorten left}}
\pgfmathsetmacro{\pgf@shorten@right}{\pgfkeysvalueof{/tikz/shorten right}}
\tikzstyle{kpoint common}=[draw,fill=white,inner sep=1pt,minimum height=4mm]
\tikzstyle{kpoint sc}=[shape=cornerpoint,kpoint common]
\tikzstyle{kpoint adjoint sc}=[shape=cornercopoint,kpoint common]
\tikzstyle{kpoint}=[shape=cornerpoint,shorten left=5pt,kpoint common]
\tikzstyle{kpoint adjoint}=[shape=cornercopoint,shorten left=5pt,kpoint common]
\tikzstyle{kpoint conjugate}=[shape=cornerpoint,shorten right=5pt,kpoint common]
\tikzstyle{kpoint transpose}=[shape=cornercopoint,shorten right=5pt,kpoint common]
\tikzstyle{kpoint symm}=[shape=cornerpoint,shorten left=5pt,shorten right=5pt,kpoint common]
\tikzstyle{wide kpoint sc}=[shape=cornerpoint,kpoint common, minimum width=1 cm]
\tikzstyle{wide kpointdag sc}=[shape=cornercopoint,kpoint common, minimum width=1 cm]
\tikzstyle{black kpoint}=[shape=cornerpoint,shorten left=5pt,kpoint common,fill=black,font=\color{white}]
\tikzstyle{black kpoint sm}=[shape=cornerpoint,shorten left=5pt,kpoint common,fill=black,font=\color{white},scale=0.75]
\tikzstyle{black kpoint adjoint}=[shape=cornercopoint,shorten left=5pt,kpoint common,fill=black,font=\color{white}]
\tikzstyle{black kpointadj}=[shape=cornercopoint,shorten left=5pt,kpoint common,fill=black,font=\color{white}]
\tikzstyle{black kpointadj sm}=[shape=cornercopoint,shorten left=5pt,kpoint common,fill=black,font=\color{white},scale=0.75]
\tikzstyle{black dkpoint}=[shape=cornerpoint,shorten left=5pt,kpoint common,fill=black, doubled,font=\color{white}]
\tikzstyle{black dkpoint adjoint}=[shape=cornercopoint,shorten left=5pt,kpoint common,fill=black, doubled,font=\color{white}]
\tikzstyle{black dkpointadj}=[shape=cornercopoint,shorten left=5pt,kpoint common,fill=black, doubled,font=\color{white}]
\tikzstyle{black dkpoint sm}=[shape=cornerpoint,shorten left=5pt,kpoint common,fill=black, doubled,font=\color{white},scale=0.75]
\tikzstyle{black dkpointadj sm}=[shape=cornercopoint,shorten left=5pt,kpoint common,fill=black, doubled,font=\color{white},scale=0.75]
\tikzstyle{kpointdag}=[kpoint adjoint]
\tikzstyle{kpointadj}=[kpoint adjoint]
\tikzstyle{kpointconj}=[kpoint conjugate]
\tikzstyle{kpointtrans}=[kpoint transpose]
\tikzstyle{big kpoint}=[kpoint, minimum width=1.2 cm, minimum height=8mm, inner sep=4pt, text depth=3mm]
\tikzstyle{wide kpoint}=[kpoint, minimum width=1 cm, inner sep=2pt]
\tikzstyle{wide kpointdag}=[kpointdag, minimum width=1 cm, inner sep=2pt]
\tikzstyle{wide kpointconj}=[kpointconj, minimum width=1 cm, inner sep=2pt]
\tikzstyle{wide kpointtrans}=[kpointtrans, minimum width=1 cm, inner sep=2pt]
\tikzstyle{wider kpoint}=[kpoint, minimum width=1.25 cm, inner sep=2pt]
\tikzstyle{wider kpointdag}=[kpointdag, minimum width=1.25 cm, inner sep=2pt]
\tikzstyle{wider kpointconj}=[kpointconj, minimum width=1.25 cm, inner sep=2pt]
\tikzstyle{wider kpointtrans}=[kpointtrans, minimum width=1.25 cm, inner sep=2pt]
\tikzstyle{gray kpoint}=[kpoint,fill=gray!50!white]
\tikzstyle{gray kpointdag}=[kpointdag,fill=gray!50!white]
\tikzstyle{gray kpointadj}=[kpointadj,fill=gray!50!white]
\tikzstyle{gray kpointconj}=[kpointconj,fill=gray!50!white]
\tikzstyle{gray kpointtrans}=[kpointtrans,fill=gray!50!white]
\tikzstyle{gray dkpoint}=[kpoint,fill=gray!50!white,doubled]
\tikzstyle{gray dkpointdag}=[kpointdag,fill=gray!50!white,doubled]
\tikzstyle{gray dkpointadj}=[kpointadj,fill=gray!50!white,doubled]
\tikzstyle{gray dkpointconj}=[kpointconj,fill=gray!50!white,doubled]
\tikzstyle{gray dkpointtrans}=[kpointtrans,fill=gray!50!white,doubled]
\tikzstyle{white label}=[draw,fill=white,rectangle,inner sep=0.7 mm]
\tikzstyle{gray label}=[draw,fill=gray!50!white,rectangle,inner sep=0.7 mm]
\tikzstyle{black label}=[draw,fill=black,rectangle,inner sep=0.7 mm]
\tikzstyle{dkpoint}=[kpoint,doubled]
\tikzstyle{wide dkpoint}=[wide kpoint,doubled]
\tikzstyle{dkpointdag}=[kpoint adjoint,doubled]
\tikzstyle{wide dkpointdag}=[wide kpointdag,doubled]
\tikzstyle{dkcopoint}=[kpoint adjoint,doubled]
\tikzstyle{dkpointadj}=[kpoint adjoint,doubled]
\tikzstyle{dkpointconj}=[kpoint conjugate,doubled]
\tikzstyle{dkpointtrans}=[kpoint transpose,doubled]
\tikzstyle{kscalar}=[kpoint common, shape=EBox, inner xsep=-1pt, inner ysep=3pt,font=\small]
\tikzstyle{kscalarconj}=[kpoint common, shape=WBox, inner xsep=-1pt, inner ysep=3pt,font=\small]
\tikzstyle{spekpoint}=[kpoint sc,minimum height=5mm,inner sep=3pt]
\tikzstyle{spekcopoint}=[kpoint adjoint sc,minimum height=5mm,inner sep=3pt]
\tikzstyle{dspekpoint}=[spekpoint,doubled]
\tikzstyle{dspekcopoint}=[spekcopoint,doubled]
 \tikzstyle{upground}=[circuit ee IEC,thick,ground,rotate=90,scale=2.5]
 \tikzstyle{downground}=[circuit ee IEC,thick,ground,rotate=-90,scale=2.5]
 \tikzstyle{bigground}=[regular polygon,regular polygon sides=3,draw=gray,scale=0.50,inner sep=-0.5pt,minimum width=10mm,fill=gray]
\tikzstyle{arrs}=[-latex,font=\small,auto]
\tikzstyle{arrow plain}=[arrs]
\tikzstyle{arrow dashed}=[dashed,arrs]
\tikzstyle{arrow bold}=[very thick,arrs]
\tikzstyle{arrow hide}=[draw=white!0,-]
\tikzstyle{arrow reverse}=[latex-]
\tikzstyle{cdnode}=[]
\let\olddagger\dagger
\renewcommand{\dagger}{\ensuremath{\olddagger}\xspace}
\theoremstyle{definition}
\newtheorem{theorem}{Theorem}[section]
\newtheorem{lemma}[]{Lemma}
\newtheorem{example*}[theorem]{Example*}
\newtheorem{examples*}[theorem]{Examples*}
\newtheorem{remark*}[theorem]{Remark*}
\theoremstyle{plain}
\def\bR{\begin{color}{red}}
\def\bB{\begin{color}{blue}}
\def\bM{\begin{color}{magenta}}
\def\bC{\begin{color}{cyan}}
\def\bW{\begin{color}{white}}
\def\bBl{\begin{color}{black}}
\def\bG{\begin{color}{green}}
\def\bY{\begin{color}{yellow}}
\def\e{\end{color}\xspace}
\newcommand{\bit}{\begin{itemize}}
\newcommand{\eit}{\end{itemize}\par\noindent}
\newcommand{\ben}{\begin{enumerate}}
\newcommand{\een}{\end{enumerate}\par\noindent}
\newcommand{\beq}{\begin{equation}}
\newcommand{\eeq}{\end{equation}\par\noindent}
\newcommand{\beqa}{\begin{eqnarray*}}
\newcommand{\eeqa}{\end{eqnarray*}\par\noindent}
\newcommand{\beqn}{\begin{eqnarray}}
\newcommand{\eeqn}{\end{eqnarray}\par\noindent}
\def\jR{\begin{color}{DarkRed}}
\def\jB{\begin{color}{MidnightBlue}}
\def\jM{\begin{color}{magenta}}
\def\jC{\begin{color}{cyan}}
\def\jW{\begin{color}{white}}
\def\jBl{\begin{color}{black}}
\def\jG{\begin{color}{green}}
\def\jY{\begin{color}{yellow}}
\def\cR{\begin{color}{Crimson}}
\def\cB{\begin{color}{cyan}}
\def\cM{\begin{color}{magenta}}
\def\cC{\begin{color}{cyan}}
\def\cW{\begin{color}{white}}
\def\cBl{\begin{color}{black}}
\def\cG{\begin{color}{green}}
\def\cY{\begin{color}{yellow}}
\definecolor{orangy}{RGB}{213,94,0}
\newcommand{\intr}{\mathrm{int}} 
\newcommand{\lspan}{\mathrm{span}}
\newcommand{\Aff}{\mathrm{Aff}} 
\newcommand{\DP}{\mathrm{DP}}
\newcommand{\PM}{\mathrm{MP}} 
\newcommand{\PMcone}{\mathcal{K}} 
\newcommand{\core}{\mathrm{core}} 
\newcommand{\LST}{\mathcal{L}(V_S,V_T)} 
\newcommand{\TST}{\mathcal{T}_S^T}
\newcommand{\discard}{\resizebox{4mm}{!}{%
\begin{tikzpicture}
	\begin{pgfonlayer}{nodelayer}
		\node [style=none] (31) at (-1, 0.25) {};
		\node [style=none] (40) at (-1, -0.25) {};
		\node [style=upground] (46) at (-1, 0.5) {};
	\end{pgfonlayer}
	\begin{pgfonlayer}{edgelayer}
		\draw (40.center) to (31.center);
	\end{pgfonlayer}
\end{tikzpicture}
}}}
\DeclareRobustCommand{\rvdots}{
  \vbox{
    \baselineskip4\p@\lineskiplimit\z@
    \kern-\p@
    \hbox{.}\hbox{.}\hbox{.}
  }}
\begin{document}

\title{Simulating all multipartite non-signalling channels via quasiprobabilistic mixtures of local channels in  generalised probabilistic theories}
\author{Paulo J.~Cavalcanti}
\affiliation{International Centre for Theory of Quantum Technologies, University of Gda\'nsk, 80-309 Gda\'nsk, Poland}
\author{John H.~Selby}
\affiliation{International Centre for Theory of Quantum Technologies, University of Gda\'nsk, 80-309 Gda\'nsk, Poland}
\author{Jamie Sikora}
\affiliation{Virginia Polytechnic Institute and State University, Blacksburg, VA 24061, USA} 
\author{Ana Bel\'en Sainz}
\affiliation{International Centre for Theory of Quantum Technologies, University of Gda\'nsk, 80-309 Gda\'nsk, Poland}

\begin{abstract}
Non-signalling quantum channels -- relevant in, e.g., the study of Bell and  Einstein-Podolsky-Rosen  scenarios -- may be simulated via affine combinations of local operations in bipartite scenarios. Moreover, when these channels correspond to stochastic maps between classical variables, such simulation is possible even in multipartite scenarios. These two results have proven useful when studying the properties of these channels, such as their communication and information processing power, and even when defining measures of the non-classicality of physical phenomena (such as Bell non-classicality and steering). 
In this paper we show that such useful quasi-stochastic characterizations of channels may be unified and applied to the broader class of multipartite non-signalling channels. Moreover, we show that this holds for non-signalling channels in quantum theory, as well as in a larger family of generalised probabilistic theories. More precisely, we prove that non-signalling channels can always be simulated by affine combinations of corresponding local operations, provided that the underlying physical theory is locally tomographic -- a property that quantum theory satisfies. Our results then can be viewed as a generalisation of Refs.~[Phys. Rev. Lett. 111, 170403] and [Phys. Rev. A 88, 022318 (2013)] to the multipartite scenario for arbitrary tomographically local generalised probabilistic theories (including quantum theory). Our proof technique leverages Hardy's duotensor formalism, highlighting its utility in this line of research.
\end{abstract}

\maketitle

\section{Introduction}

Quantum operations are at the core of communication and information processing tasks, and how well we can perform at the latter may depend on the properties of the quantum operations that we have at hand. One particular set of operations of interest is that of \textit{non-signalling quantum channels} \cite{beckman2001causal}, i.e., those that cannot be used by two distant parties to exchange information in a way that is against the laws of relativity theory.  Bipartite non-signalling quantum operations have been extensively studied, specially since they play a central role in Bell \cite{Bell64} and Einstein-Podolsky-Rosen `steering' \cite{einstein1935can,schrodinger1935discussion} scenarios, which in turn underpin cryptographic protocols \cite{BellRev,SteeRev}. In addition, the simulation of bipartite non-signalling  quantum channels via affine combinations of local operations has provided valuable insight on the exploration of the advantage they provide for communication and information processing tasks \cite{al2013simulating,chiribella2013quantum}.  

In recent years it has become fruitful to study quantum theory from the `outside', that is, by placing it as one theory within a broad landscape of logically consistent theories. This allows one to understand \emph{why} quantum theory has particular features,  and also its  possibilities and limitations for various applications.
The framework of generalised probabilistic theories \cite{hardy2001quantum,barrett2007gpt} (GPTs) has become the preferred tool for such studies, for example, shedding light on matters pertaining to: cryptography 
	\cite{
	sikora2018simple,
	selby2018make,
	sikora2019impossibility,
	barnum2008nonclassicality,
	lami2018ultimate,
	barrett2005no}; 
computation \cite{
	barrett2017computational,
	lee2015computation,
	krumm2018quantum,
	lee2016generalised,
	lee2016deriving,
	barnum2018oracles,
	muller2012structure}; 
interference \cite{
	ududec2011three,
	lee2017higher,
	garner2018interferometric,
	barnum2014higher,
	dakic2014density,
	barnum2017ruling,
	horvat2020interference}; 
thermodynamics \cite{
	short2010entropy,
	barnum2010entropy,
	chiribella2015entanglement,
	kimura2016entropies,
	chiribella2017microcanonical,
	barnum2015entropy};
contextuality \cite{
	schmid2020structure,
	schmid2019characterization,
	shahandeh2019contextuality,
	chiribella2014measurement};
nonlocality \cite{
	gross2010all,
	czekaj2018bell,
	barnum2010local,
	czekaj2020correlations,
	henson2014theory,
	weilenmann2020analysing,
	lami2018non};
steering \cite{
	barnum2013ensemble,
	plavala2017conditions,
	banik2015measurement,
	cavalcanti2021witworld}; 
decoherence \cite{
	richens2017entanglement,
	lee2018no,
	scandolo2018possible,
	selby2017leaks};
information processing \cite{
	bae2016structure,
	barrett2007information,
	JP17,
	barnum2007generalized,
	barnum2011information,
	barnum2012teleportation,
	heinosaari2019no}; 
incompatibility \cite{
	plavala2016all,
	jenvcova2018incompatible,
	filippov2017necessary,
	kuramochi2020compatibility,
	bluhm2020incompatibility};
uncertainty \cite{
	dahlsten2014uncertainty,
	takakura2020preparation,
	takakura2021entropic,
	sun2020no};
as well as providing a foundational view of the primitive structures of physical theories \cite{
	masanes2019measurement,
	galley2017classification,
	galley2018any,
	galley2020dynamics,
	chiribella2011informational,
	chiribella2014dilation,
	chiribella2014distinguishability,
	barnum2014local,
	branford2018defining,
	wilce2009four,
	wilce2018royal,
	barnum2016composites,
	barnum2013symmetry,
	wilce2011symmetry,
	masanes2011derivation,
	masanes2013existence,
	mueller2013three}. For a  comprehensive introduction to the field see Refs.~\cite{lami2018non,plavala2021general,muller2021probabilistic}.

In this work we investigate no-signalling channels in GPTs. In particular, we prove a useful technical result, namely
that multipartite non-signalling channels in locally-tomographic GPTs \cite{hardy2001quantum} can be simulated by affine combinations of product (local) channels (Theorem \ref{thm:main}).  Our results can be viewed as a generalisation of those of Ref.~\cite{al2013simulating} and of Ref.~\cite[Lem.~1]{chiribella2013quantum} to arbitrary tomographically local GPTs: the former applies only to multipartite non-signalling stochastic maps on classical variables, while the latter applies to bipartite non-signalling quantum channels. 

Our proofs leverage the convenient duotensor formalism of Ref.~\cite{hardy2011reformulating} with a slight twist based on Ref.~\cite{van2017quantum} which allows us to directly lift the result of Ref.~\cite{al2013simulating} (using a generalisation of Lem.~2 in Ref.~\cite{chiribella2013quantum}) to this more general setting. We believe that this way of lifting structural properties of stochastic maps to properties of channels in arbitrary tomographically local GPTs via the duotensor formalism \cite{hardy2011reformulating} may be a useful tool in future research.

\section{Generalised Probabilistic Theories: the basics}

The framework of Generalised Probabilistic Theories (GPTs) can be used to define arbitrary physical theories. The simplicity of the framework enables various alternative theories to be formulated and explored while allowing at the same time a deep study of the probabilistic and compositional aspects of such theories. It is based on the tenet that a minimal requirement of any physical theory is that it must make probabilistic predictions about the outcomes of experiments. Whilst this is conceptually extremely minimal, the mathematical consequences of this lead to a rich formal structure known as a GPT.

Because physical theories describe predictions about measurement outcomes in experiments, a few elements are necessarily present in all of them. Namely, these theories need to talk about 
types of systems, possible states for each of them, possible measurement outcomes, transformations, and the operation of discarding a system (see Table \ref{table:gpts}). In quantum theory, these elements are, respectively, the Hilbert spaces, the density operators on them, positive operators upper-bounded by the identity, completely positive trace-non-increasing (CPTNI) linear maps, and the (partial) trace operation.

\begin{table}
\begin{center}
\begin{tabular}{| l || c | c | c | c}
\hline
Elements of a GPT &  Quantum theory & Classical theory \\ \hline
Systems &  Hilbert spaces & finite sets \\
States &  density operators  & probability distributions\\
Effects &  POVM element & $[0,1]-$valued functions \\
Discarding effect &  (partial) trace & marginalisation\\
Transformations &  CPTNI linear maps & substochastic maps \\
Composition rule &  tensor product &  Cartesian product \\ \hline
\end{tabular}
\end{center}
\caption{Elements that define a generalised probabilistic theory, and how they are defined for the particular case of quantum and classical theories viewed as GPTs. }
\label{table:gpts}
\end{table}

Having those elements present, although necessary, is not sufficient to express the full form of a physical theory. Some structure relating them are implied by the way that experiments are performed. Abstractly speaking, a notion of connectivity between those elements must also be present because, in experiments, we perform actions on systems, that is, we subject them to processes, and these processes can happen in parallel (independently) or in sequence. This motivates a notion of compositionality of processes.

From this notion of how the experimental processes connect, or compose, a convenient diagrammatic notation can be defined so as to capture the entire structure of the GPTs. We can represent any process by a box, and encode the type of system on which it happens as a labelled input wire at the bottom of it. (Hence, we have also implied that systems are represented by wires.) Additionally, since the type of a system can change after a process, we denote the output type of a process by a labelled wire on top of its box. In this notation, then, a system type $S$, and a transformation $T$ from a system type $A$ to a system type $B$, respectively, appear as

\begin{equation}\label{}
	\begin{tikzpicture}
		\node [style=none] (0) at (0, 0.75) {};
		\node [style=none] (1) at (0, -0.75) {};
		\node [style={right label}] (2) at (0, -0.5) {$S$};
		\draw (0.center) to (1.center);
	\end{tikzpicture}
	\quad\text{and }\quad 
	\begin{tikzpicture}
	\begin{pgfonlayer}{nodelayer}
		\node [style=none] (0) at (0, -1.25) {};
		\node [style={small box}] (1) at (0, -0) {$T$};
		\node [style={right label}] (2) at (0, -1.1) {$A$};
		\node [style={right label}] (3) at (0, 1) {$B$};
		\node [style=none] (4) at (0, 1.25) {};
	\end{pgfonlayer}
	\begin{pgfonlayer}{edgelayer}
		\draw (0.center) to (1);
		\draw (4.center) to (1);
	\end{pgfonlayer}
\end{tikzpicture}.
\end{equation}
Note that a state of a system can be conceptualised as some  \textit{preparation procedure}, which, abstractly speaking, is also a process. Hence we can represent it as a box that has no input wire but has as output the wire corresponding to the type of that system. Similarly, an effect, or measurement outcome, is a box with input wire corresponding to the system where it can be observed, and no output wire. We follow the convention that states and effects are represented by triangular boxes, so a state $\sigma$ and an effect $e$ of a system $S$ appear as
\begin{equation}\label{}	
\begin{tikzpicture}
	\begin{pgfonlayer}{nodelayer}
		\node [style=none] (0) at (-1, 0.75) {};
		\node [style=point] (1) at (-1, -0.5) {$\sigma$};
		\node [style=right label] (6) at (-1, 0.25) {$S$};
	\end{pgfonlayer}
	\begin{pgfonlayer}{edgelayer}
		\draw  (0.center) to (1);
	\end{pgfonlayer}
\end{tikzpicture}
	\quad \text{and} \quad
\begin{tikzpicture}
	\begin{pgfonlayer}{nodelayer}
		\node [style=none] (0) at (-1, -0.75) {};
		\node [style=copoint] (1) at (-1, 0.5) {$e$};
		\node [style=right label] (6) at (-1, -0.4) {$S$};
	\end{pgfonlayer}
	\begin{pgfonlayer}{edgelayer}
		\draw (0.center) to (1);
	\end{pgfonlayer}
\end{tikzpicture},
\end{equation}
respectively. Because of this, the discarding operation, since it has an input but no output, appears as a special effect in the theory. This effect is sometimes called the deterministic effect and is unique for each system type\footnote{The uniqueness of this discarding effect means that we are dealing with so-called \emph{causal} GPTs \cite{chiribella2010probabilistic}.}. In this notation, it is represented by
\begin{equation}\label{}	
\begin{tikzpicture}
	\begin{pgfonlayer}{nodelayer}
		\node [style=none] (0) at (-1, -0.75) {};
		\node [style=none] (1) at (-1, 0.25) {};
		\node [style=right label] (6) at (-1, -0.4) {$S$};
		\node [style=upground] (7) at (-1, 0.5) {};
	\end{pgfonlayer}
	\begin{pgfonlayer}{edgelayer}
		\draw (0.center) to (1.center);
	\end{pgfonlayer}
\end{tikzpicture}.
\end{equation}
These diagrammatic pieces can be connected when the input/output wire types match. This represents the sequential composition of processes. When processes are instead drawn side by side, we are representing their parallel composition. By connecting boxes, therefore, we can then construct more complex diagrams, i.e. complex processes, such as
\begin{equation}\label{}
\InputIfFileExists{Diagrams/connections.tikz}{}{\input{./figures/Diagrams/connections.tikz}}\ ,
\end{equation}
where we omit the wire labels for simplicity, but it should be clear that only matching types can be connected.

When a diagram has no loose wires, they are interpreted as numbers, which in the case of GPTs are the probabilities generated by the theory. For instance,
\begin{equation}\label{}
\InputIfFileExists{Diagrams/probability.tikz}{}{\input{./figures/Diagrams/probability.tikz}}\quad =\ \  \mathsf{Prob}(e|g,\sigma)
\end{equation}
 denotes the probability that the outcome associated to effect $e$ is observed when the system is prepared in state $\sigma$ and a transformation $g$ is applied to it. 

Of course, we might need to describe systems that are composed by simpler parts -- multipartite systems -- so we can emphasize that some system is composite by drawing the wires of its parts side by side
\begin{equation}\label{}
\begin{tikzpicture}
	\begin{pgfonlayer}{nodelayer}
		\node [style=none] (31) at (0, 0.75) {};
		\node [style=none] (38) at (0, -0.75) {};
		\node [style=none] (39) at (0, 0.75) {};
		\node [style=label] (44) at (1, -0.5) {$A\otimes B$};
	\end{pgfonlayer}
	\begin{pgfonlayer}{edgelayer}
		\draw (38.center) to (39.center);
	\end{pgfonlayer}
\end{tikzpicture}
}\quad =\quad %
\InputIfFileExists{Diagrams/composite2.tikz}{}{\input{./figures/Diagrams/composite2.tikz}}.
\end{equation}
When we represent bipartite composite systems by the two wires together, its deterministic effect is represented by the composition of the deterministic effects of its parts:
\begin{equation}\label{}
\begin{tikzpicture}
	\begin{pgfonlayer}{nodelayer}
		\node [style=none] (31) at (0, 0.25) {};
		\node [style=none] (40) at (0, -0.75) {};
		\node [style=label] (45) at (1, -0.5) {$A\otimes B$};
		\node [style=upground] (46) at (0, 0.5) {};
	\end{pgfonlayer}
	\begin{pgfonlayer}{edgelayer}
		\draw (40.center) to (31.center);
	\end{pgfonlayer}
\end{tikzpicture}
}\quad =\quad %
\InputIfFileExists{Diagrams/deteffect2.tikz}{}{\input{./figures/Diagrams/deteffect2.tikz}}.
\end{equation}
With what we have, we can represent simple experimental processes, composite processes, and probabilities of outcomes in those experiments. To reason about them, we need now a notion of equality of processes, or, in other words, a notion of tomography. We say that two processes are equal if they give the same probabilities in all situations, so
\begin{equation}\label{}
\InputIfFileExists{Diagrams/tomography1.tikz}{}{\input{./figures/Diagrams/tomography1.tikz}}\quad =\quad %
\InputIfFileExists{Diagrams/tomography2.tikz}{}{\input{./figures/Diagrams/tomography2.tikz}}\qquad \iff \qquad  %
\InputIfFileExists{Diagrams/tomography3.tikz}{}{\input{./figures/Diagrams/tomography3.tikz}}\quad =\quad %
\InputIfFileExists{Diagrams/tomography4.tikz}{}{\input{./figures/Diagrams/tomography4.tikz}} \quad \forall \sigma,e.
\end{equation}
In fact, in this paper we will work with a special class of GPTs which satisfy the principle of \emph{tomographic locality} \cite{hardy2001quantum,hardy2011reformulating}. This means that:
\begin{equation}\label{}
\InputIfFileExists{Diagrams/tomography1.tikz}{}{\input{./figures/Diagrams/tomography1.tikz}}\quad =\quad %
\InputIfFileExists{Diagrams/tomography2.tikz}{}{\input{./figures/Diagrams/tomography2.tikz}}\qquad \iff \qquad  %
\InputIfFileExists{Diagrams/tomography5.tikz}{}{\input{./figures/Diagrams/tomography5.tikz}}\quad = \quad%
\InputIfFileExists{Diagrams/tomography6.tikz}{}{\input{./figures/Diagrams/tomography6.tikz}} \quad \forall \sigma,e,
\end{equation}
that is, in a tomographically local theory we can do process tomography without a side channel.

An important type of processes is that of those that are \textit{discard-preserving}, which means they satisfy the following: 
\begin{equation}\label{}
\InputIfFileExists{Diagrams/causal1.tikz}{}{\input{./figures/Diagrams/causal1.tikz}}\quad =\quad %
\begin{tikzpicture}
	\begin{pgfonlayer}{nodelayer}
		\node [style=none] (2) at (0, 0.25) {};
		\node [style=none] (3) at (0, -0.75) {};
		\node [style=upground] (7) at (0, 0.5) {};
	\end{pgfonlayer}
	\begin{pgfonlayer}{edgelayer}
		\draw[qWire] (3.center) to (2.center);
	\end{pgfonlayer}
\end{tikzpicture}
}.
\end{equation}
In the case of quantum theory, these correspond to the trace-preserving maps. Physically-realisable discard-preserving processes in a GPT are known as \emph{channels}.
Discard-preservation also defines  a notion of causality for processes \cite{chiribella2010probabilistic,coecke2016terminality}: a process is said to be causal if it is discarding preserving.  This is so because this condition ensures compatibility with relativistic causal structure \cite{kissinger2017equivalence}.

A final ingredient in the GPT formalism is the possibility to represent convex mixtures of processes. This stems from the requirement that in an experiment we can always decide to perform $f$ with probability $p$ or $g$ with probability $(1-p)$, at least, provided that $f$ and $g$ have the same input and output systems. This is introduced through the definition of a sum of processes that distributes over diagrams:
\begin{equation}\label{}
\InputIfFileExists{Diagrams/mixture1.tikz}{}{\input{./figures/Diagrams/mixture1.tikz}}\quad =\quad %
\InputIfFileExists{Diagrams/mixture2.tikz}{}{\input{./figures/Diagrams/mixture2.tikz}}.
\end{equation}
Note that this definition implies that we can only sum processes with the same input/output types. From this, since a probability $p$ can be a number (diagram without loose wires) of the theory, we can write
\begin{equation}\label{}
\InputIfFileExists{Diagrams/mixture3.tikz}{}{\input{./figures/Diagrams/mixture3.tikz}}\quad =\quad p\ \ %
\InputIfFileExists{Diagrams/mixture4.tikz}{}{\input{./figures/Diagrams/mixture4.tikz}}\   + \ (1-p)\ \ %
\InputIfFileExists{Diagrams/mixture5.tikz}{}{\input{./figures/Diagrams/mixture5.tikz}}\quad =\quad \sum_{i} p_{i}\ %
\InputIfFileExists{Diagrams/mixture6.tikz}{}{\input{./figures/Diagrams/mixture6.tikz}}
\end{equation}
to describe convex mixtures of processes.

At this point a notion of order can be defined for processes:
\begin{equation}\label{}
\InputIfFileExists{Diagrams/order1.tikz}{}{\input{./figures/Diagrams/order1.tikz}}\quad \leq\quad %
\InputIfFileExists{Diagrams/order2.tikz}{}{\input{./figures/Diagrams/order2.tikz}}\qquad \iff\qquad \exists\  %
\InputIfFileExists{Diagrams/order3.tikz}{}{\input{./figures/Diagrams/order3.tikz}} \quad\text{s.t. }\ %
\InputIfFileExists{Diagrams/order1.tikz}{}{\input{./figures/Diagrams/order1.tikz}} +\ \ %
\InputIfFileExists{Diagrams/order3.tikz}{}{\input{./figures/Diagrams/order3.tikz}}\ =\ \ %
\InputIfFileExists{Diagrams/order2.tikz}{}{\input{./figures/Diagrams/order2.tikz}}.
\end{equation}
This order allows us to define discard-non-increasing processes. A process $f$ is said to be discard-non-increasing if and only if
\begin{equation}\label{}
\InputIfFileExists{Diagrams/causal1.tikz}{}{\input{./figures/Diagrams/causal1.tikz}}\quad \leq\quad %
}.
\end{equation}
Note that in any GPT all (physically-realisable) processes must be discard-nonincreasing; this corresponds to the constraint in quantum theory that processes are trace-nonincreasing.
In particular, this means that for any effect in the theory, there must be another effect such that they sum to the deterministic effect. Note that this is important for the definition of measurements. In quantum theory, for example, the deterministic effect is the trace operation, or multiplication by identity followed by the trace, and it is required that the POVM elements forming a measurement sum to identity, so each of them is less than or equal to the determistic effect.

Since in this work we focus on the class of GPTs that are tomographically local, we can moreover use the particular duotensor notation of Ref.~\cite{hardy2011reformulating}. Next we will present the basics of this notation.

\section{Duotensor basics}

Here we present an adaptation to the duotensor formalism where,
in addition to the GPT systems of the previous section, we also have classical systems representing measurement outcomes and control systems. In order to distinguish these two kinds of systems,  the classical ones will be drawn horizontally. We will also label them by finite sets, $\Lambda$:
\beq
\begin{tikzpicture}
	\begin{pgfonlayer}{nodelayer}
		\node [style=none] (1) at (-1, 0) {};
		\node [style=none] (7) at (1, 0) {};
		\node [style=up label] (9) at (0, 0) {$\Lambda$};
	\end{pgfonlayer}
	\begin{pgfonlayer}{edgelayer}
		\draw [cWire] (1.center) to (7.center);
	\end{pgfonlayer}
\end{tikzpicture}
}.
\eeq
The physical processes transforming between these classical systems are (sub)stochastic maps between these finite sets. We draw these as white boxes, such as:
\beq
\InputIfFileExists{Diagrams/stochMap.tikz}{}{\input{./figures/Diagrams/stochMap.tikz}}.
\eeq
A particularly useful example which we will make use of in this work is the copy map, which we draw as a white dot and is defined by:
\beq
\InputIfFileExists{Diagrams/copyDef1.tikz}{}{\input{./figures/Diagrams/copyDef1.tikz}} \quad = \quad %
\InputIfFileExists{Diagrams/copyDef2.tikz}{}{\input{./figures/Diagrams/copyDef2.tikz}} \qquad \forall \lambda \in\Lambda.
\eeq
Note that the copy map satisfies:
\beq\label{eq:compCopy}
\InputIfFileExists{Diagrams/copyComp1.tikz}{}{\input{./figures/Diagrams/copyComp1.tikz}} \quad = \quad %
\InputIfFileExists{Diagrams/copyComp2.tikz}{}{\input{./figures/Diagrams/copyComp2.tikz}},
\eeq
that is, copying the components of a system is the same as copying the composite system.

In contrast to the physical (sub)stochastic maps, we will draw mathematically well defined but unphysical processes as black boxes such as:
\beq
\InputIfFileExists{Diagrams/unphysMap.tikz}{}{\input{./figures/Diagrams/unphysMap.tikz}},
\eeq
which, in this case, would be a linear map from $\Lambda$ to $\Lambda'$ which is not (sub)stochastic, e.g., it may have negative coefficients.

 Note that in contrast to the approach of Ref.~\cite{hardy2011reformulating}, rather than labelling horizontal systems by black and white dots, we instead label the processes as being either black or white. This is equivalent but more convenient for us as, on the one hand, we can interpret the color as representing whether or not a process is physical, and, on the other hand, it takes us to a more standard category-theoretic notation. Indeed, categorically there is no distinction between the horizontal and vertical wires, it is simply a convenient way to label the different objects, at which point it is clear that all of the processes that we draw below live inside the category of real linear maps.

For each system $S$ in the GPT we define a particular minimal informationally-complete state preparation and measurement. We call these the fiducial preparation and fiducial measurement. A state preparation is a box which has a classical input and a GPT output where the classical input controls which state is prepared, whilst a measurement is a box which has a GPT input and a classical output where the classical output encodes the result of the measurement. We can therefore denote the fiducial preparation and fiducial measurement for a system $S$ as:
\beq
\InputIfFileExists{Diagrams/fidPrep.tikz}{}{\input{./figures/Diagrams/fidPrep.tikz}}\quad\text{and}\quad %
\InputIfFileExists{Diagrams/fidMeas.tikz}{}{\input{./figures/Diagrams/fidMeas.tikz}}
\eeq
where without loss of generality we take $\Lambda_S$ to index both the fiducial set of states and the fiducial set of effects. Moreover, all of the fiducial states are normalised and the fiducial effects sum to the unit effect, such that:
\beq \label{eq:Norm}
\InputIfFileExists{Diagrams/fidPrepNorm.tikz}{}{\input{./figures/Diagrams/fidPrepNorm.tikz}}\quad =\quad %
\begin{tikzpicture}
	\begin{pgfonlayer}{nodelayer}
		\node [style=none] (0) at (0, -0.25) {};
		\node [style=none] (1) at (-1, -0.25) {};
		\node [style=up label] (4) at (-0.75, -0.25) {$\Lambda_S$};
		\node [style=rightground] (5) at (0.25, -0.25) {};
	\end{pgfonlayer}
	\begin{pgfonlayer}{edgelayer}
		\draw [cWire] (1.center) to (0.center);
	\end{pgfonlayer}
\end{tikzpicture}
} \qquad \text{and} \qquad %
\InputIfFileExists{Diagrams/fidMeasNorm.tikz}{}{\input{./figures/Diagrams/fidMeasNorm.tikz}}\quad = \quad%
\begin{tikzpicture}
	\begin{pgfonlayer}{nodelayer}
		\node [style=none] (0) at (0, 0.5) {};
		\node [style=none] (2) at (0, -0.5) {};
		\node [style=right label] (3) at (0, -0.25) {$S$};
		\node [style=upground] (5) at (0, 0.75) {};
	\end{pgfonlayer}
	\begin{pgfonlayer}{edgelayer}
		\draw [qWire] (0.center) to (2.center);
	\end{pgfonlayer}
\end{tikzpicture}
}.
\eeq
Note that here we follow the convention of Ref.~\cite{van2017quantum} rather than Ref.~\cite{hardy2011reformulating}, as the former demands that the fiducial effects form a measurement whilst the latter does not. Note that this does not constitute a loss of generality as a minimal informationally-complete measurement can be shown to exist for any GPT.

Now, for each system $S$, define the \emph{fiducial transition matrix} by 
\beq
\InputIfFileExists{Diagrams/fidTransition1.tikz}{}{\input{./figures/Diagrams/fidTransition1.tikz}}\quad :=\quad %
\InputIfFileExists{Diagrams/fidTransition.tikz}{}{\input{./figures/Diagrams/fidTransition.tikz}}
\eeq
and note that eq.~\eqref{eq:Norm} implies that these fiducial transition matrices are  stochastic maps, such that:
\beq
\InputIfFileExists{Diagrams/fidTransNorm.tikz}{}{\input{./figures/Diagrams/fidTransNorm.tikz}}\quad =\quad %
\begin{tikzpicture}
	\begin{pgfonlayer}{nodelayer}
		\node [style=none] (5) at (0, 0) {};
		\node [style=none] (6) at (-1, 0) {};
		\node [style=up label] (9) at (-0.75, 0) {$\Lambda_S$};
		\node [style=rightground] (10) at (0.25, 0) {};
	\end{pgfonlayer}
	\begin{pgfonlayer}{edgelayer}
		\draw [cWire] (6.center) to (5.center);
	\end{pgfonlayer}
\end{tikzpicture}
}.
\eeq

Now, the fact that the fiducial preparation and measurement are informationally-complete means that they are invertible linear maps. Importantly, however, these inverses are \emph{not} typically physical transformations. We therefore denote them as:
\beq
\InputIfFileExists{Diagrams/fidPrepInv.tikz}{}{\input{./figures/Diagrams/fidPrepInv.tikz}} \qquad \text{and} \qquad %
\InputIfFileExists{Diagrams/fidMeasInv.tikz}{}{\input{./figures/Diagrams/fidMeasInv.tikz}}
\eeq
such that:
\beq\label{eq:fidInv}
\InputIfFileExists{Diagrams/fidInv1.tikz}{}{\input{./figures/Diagrams/fidInv1.tikz}}\ \ =\ \ %
\begin{tikzpicture}
	\begin{pgfonlayer}{nodelayer}
		\node [style=none] (1) at (-1, 0) {};
		\node [style=none] (7) at (1, 0) {};
		\node [style=up label] (9) at (0, 0) {$\Lambda_S$};
	\end{pgfonlayer}
	\begin{pgfonlayer}{edgelayer}
		\draw [cWire] (1.center) to (7.center);
	\end{pgfonlayer}
\end{tikzpicture}
} \ \ =\ \ %
\InputIfFileExists{Diagrams/fidInv2.tikz}{}{\input{./figures/Diagrams/fidInv2.tikz}} \qquad \text{and} \qquad %
\InputIfFileExists{Diagrams/fidInv3.tikz}{}{\input{./figures/Diagrams/fidInv3.tikz}} \ \ =\ \ %
\begin{tikzpicture}
	\begin{pgfonlayer}{nodelayer}
		\node [style=right label] (3) at (0, 0) {$S$};
		\node [style=none] (10) at (0, 0.75) {};
		\node [style=none] (14) at (0, -0.75) {};
	\end{pgfonlayer}
	\begin{pgfonlayer}{edgelayer}
		\draw [qWire] (10.center) to (14.center);
	\end{pgfonlayer}
\end{tikzpicture}
} \ \ =\ \ %
\InputIfFileExists{Diagrams/fidInv4.tikz}{}{\input{./figures/Diagrams/fidInv4.tikz}}.
\eeq
We can then moreover define
\beq
\InputIfFileExists{Diagrams/fidTransInv1.tikz}{}{\input{./figures/Diagrams/fidTransInv1.tikz}}\quad :=\quad %
\InputIfFileExists{Diagrams/fidTransInv.tikz}{}{\input{./figures/Diagrams/fidTransInv.tikz}},
\eeq
which can easily be seen using Eq.~\eqref{eq:fidInv} to be the inverse of the fiducial transition matrix. Hence:
\beq
\InputIfFileExists{Diagrams/fidTransInv2.tikz}{}{\input{./figures/Diagrams/fidTransInv2.tikz}} \quad =\quad  %
\InputIfFileExists{Diagrams/fidTransInv3.tikz}{}{\input{./figures/Diagrams/fidTransInv3.tikz}}\quad  =\quad  %
\InputIfFileExists{Diagrams/fidTransInv4.tikz}{}{\input{./figures/Diagrams/fidTransInv4.tikz}}.   
\eeq
The fiducial transition matrix and its inverse (the white and black squares respectively) are known as \emph{hopping metrics} in the terminology of Hardy.

It is also easy to see from these conditions that:
\beq
\InputIfFileExists{Diagrams/fidPrepInv.tikz}{}{\input{./figures/Diagrams/fidPrepInv.tikz}}\quad  =\quad %
\InputIfFileExists{Diagrams/fidTransInv6.tikz}{}{\input{./figures/Diagrams/fidTransInv6.tikz}}  \qquad \text{and} \qquad %
\InputIfFileExists{Diagrams/fidMeasInv.tikz}{}{\input{./figures/Diagrams/fidMeasInv.tikz}} \quad =\quad  %
\InputIfFileExists{Diagrams/fidTransInv5.tikz}{}{\input{./figures/Diagrams/fidTransInv5.tikz}},
\eeq
which, in particular, means that:
\beq\label{eq:IdDecomp}
\begin{tikzpicture}
	\begin{pgfonlayer}{nodelayer}
		\node [style=right label] (13) at (0, 0) {$S$};
		\node [style=none] (14) at (0, -0.75) {};
		\node [style=none] (15) at (0, 0.75) {};
	\end{pgfonlayer}
	\begin{pgfonlayer}{edgelayer}
		\draw [qWire] (14.center) to (15.center);
	\end{pgfonlayer}
\end{tikzpicture}
}\quad  =\quad  %
\InputIfFileExists{Diagrams/IdDecomp.tikz}{}{\input{./figures/Diagrams/IdDecomp.tikz}}.
\eeq
Moreover, it is also easy to show that:
\beq \label{eq:BlackNorm}
\InputIfFileExists{Diagrams/fidPrepNormB.tikz}{}{\input{./figures/Diagrams/fidPrepNormB.tikz}}\quad  =\quad  %
} \ \text{,} \qquad %
\InputIfFileExists{Diagrams/fidMeasNormB.tikz}{}{\input{./figures/Diagrams/fidMeasNormB.tikz}}\quad  =\quad  %
} \  \text{, \quad and}\qquad %
\InputIfFileExists{Diagrams/fidTransNormB.tikz}{}{\input{./figures/Diagrams/fidTransNormB.tikz}}\quad  =\quad  %
}\ .
\eeq

The key use of all of this for us, is that it allows us to map  any GPT channel to a stochastic map and back again as follows: a GPT channel is mapped to a stochastic map via
\beq \label{eq:themapGPTstoc}
\InputIfFileExists{Diagrams/GPTChannel.tikz}{}{\input{./figures/Diagrams/GPTChannel.tikz}}\quad  \mapsto\quad  %
\InputIfFileExists{Diagrams/GPTChannel2.tikz}{}{\input{./figures/Diagrams/GPTChannel2.tikz}}\,,
\eeq
and the stochastic map associated to the GPT channel can be mapped back to the GPT channel via
\beq
\InputIfFileExists{Diagrams/GPTChannel2.tikz}{}{\input{./figures/Diagrams/GPTChannel2.tikz}}\quad  \mapsto \quad  %
\InputIfFileExists{Diagrams/GPTChannel3.tikz}{}{\input{./figures/Diagrams/GPTChannel3.tikz}}\quad  =\quad   %
\InputIfFileExists{Diagrams/GPTChannel.tikz}{}{\input{./figures/Diagrams/GPTChannel.tikz}}\,.
\eeq
It is clear that the RHS of Eq.~\eqref{eq:themapGPTstoc} is indeed stochastic as it is positive (since it is composed out of physically realisable GPT transformations) and satisfies:
\beq
\InputIfFileExists{Diagrams/GPTStochNorm1.tikz}{}{\input{./figures/Diagrams/GPTStochNorm1.tikz}} \quad =\quad  %
\InputIfFileExists{Diagrams/GPTStochNorm2.tikz}{}{\input{./figures/Diagrams/GPTStochNorm2.tikz}}\quad  =\quad  %
\InputIfFileExists{Diagrams/GPTStochNorm3.tikz}{}{\input{./figures/Diagrams/GPTStochNorm3.tikz}}\quad  =\quad  %
\InputIfFileExists{Diagrams/GPTStochNorm4.tikz}{}{\input{./figures/Diagrams/GPTStochNorm4.tikz}}
\eeq
where the second equality holds because $C$ is a GPT channel rather than a generic GPT process.
Similar arguments imply that if $C$ satisfies certain no-signalling conditions then so to will the associated stochastic map.

For example, a bipartite channel $B$ is said to be non-signalling if:
\beq
\InputIfFileExists{Diagrams/NSChannelDef1.tikz}{}{\input{./figures/Diagrams/NSChannelDef1.tikz}}\quad  =\quad  %
\InputIfFileExists{Diagrams/NSChannelDef2.tikz}{}{\input{./figures/Diagrams/NSChannelDef2.tikz}} \qquad \text{and} \qquad %
\InputIfFileExists{Diagrams/NSChannelDef3.tikz}{}{\input{./figures/Diagrams/NSChannelDef3.tikz}}\quad  =\quad  %
\InputIfFileExists{Diagrams/NSChannelDef4.tikz}{}{\input{./figures/Diagrams/NSChannelDef4.tikz}}
\eeq
from which it is easy to show that the associated stochastic map will also be non-signalling, for example:
\beq
\InputIfFileExists{Diagrams/NSChannel1.tikz}{}{\input{./figures/Diagrams/NSChannel1.tikz}}\quad =\quad %
\InputIfFileExists{Diagrams/NSChannel2.tikz}{}{\input{./figures/Diagrams/NSChannel2.tikz}}\quad =\quad %
\InputIfFileExists{Diagrams/NSChannel3.tikz}{}{\input{./figures/Diagrams/NSChannel3.tikz}}\quad =\quad %
\InputIfFileExists{Diagrams/NSChannel4.tikz}{}{\input{./figures/Diagrams/NSChannel4.tikz}}.
\eeq
This straightforwardly generalises to  multipartite GPT channels, and also to the case where only some of the no-signalling conditions hold. That is, the non-signalling structure of the channel and of the associated stochastic map are the same.

\section{Geometry of transformations}\label{section:geometry}

 In this section we present a geometric perspective on some of the processes discussed above, as well as on particular types of channels.

\subsection{States}
First let us start by discussing the geometry of the state space for some system $S$. Schematically this looks like:
\beq
\includegraphics[clip, trim=1cm 13.5cm 1cm 5cm,scale=.5]{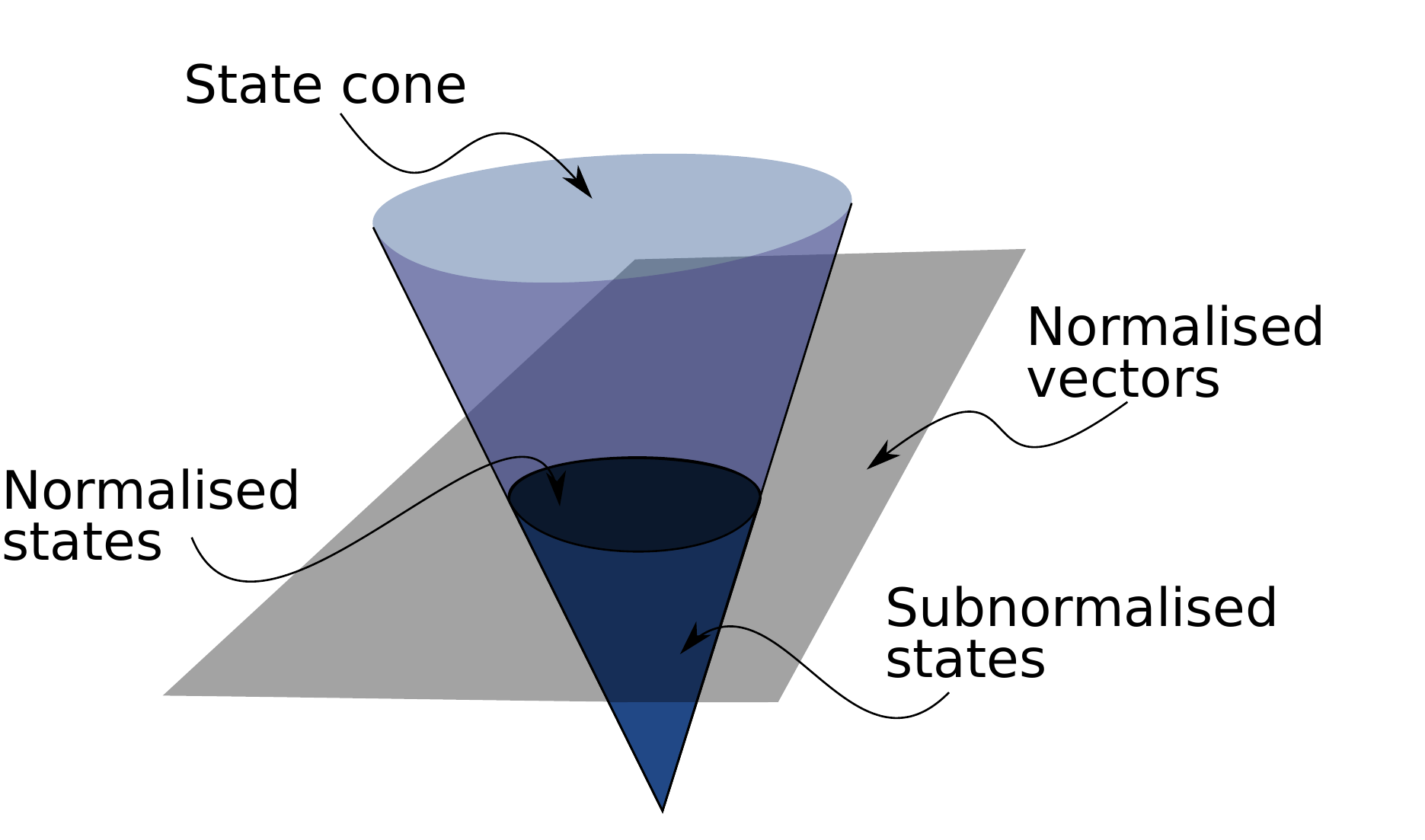}
\eeq

Formally, we have some real vector space $V_S$ which contains a convex cone of states, $\mathcal{T}^S$, which is closed, pointed, and full dimensional, with an intersecting hyperplane which defines the normalised vectors. The intersection of this hyperplane and the state cone defines the normalised state space, $\Omega_S$.  
A subnormalized state $s$ is a vector in the cone such that there exists $\alpha \geq 1$ such that $\alpha s$ is normalized. 
In particular, the convex set of subnormalised states spans the vector space and, moreover, there exists at least one normalized state which is interior to the cone.

\subsection{Effects}\label{section:effects}
Next let us consider the geometry of the effect space for some system $S$. Schematically this looks like:
\beq
\includegraphics[clip, trim=1cm 16.5cm 1cm 4cm,scale=.5]{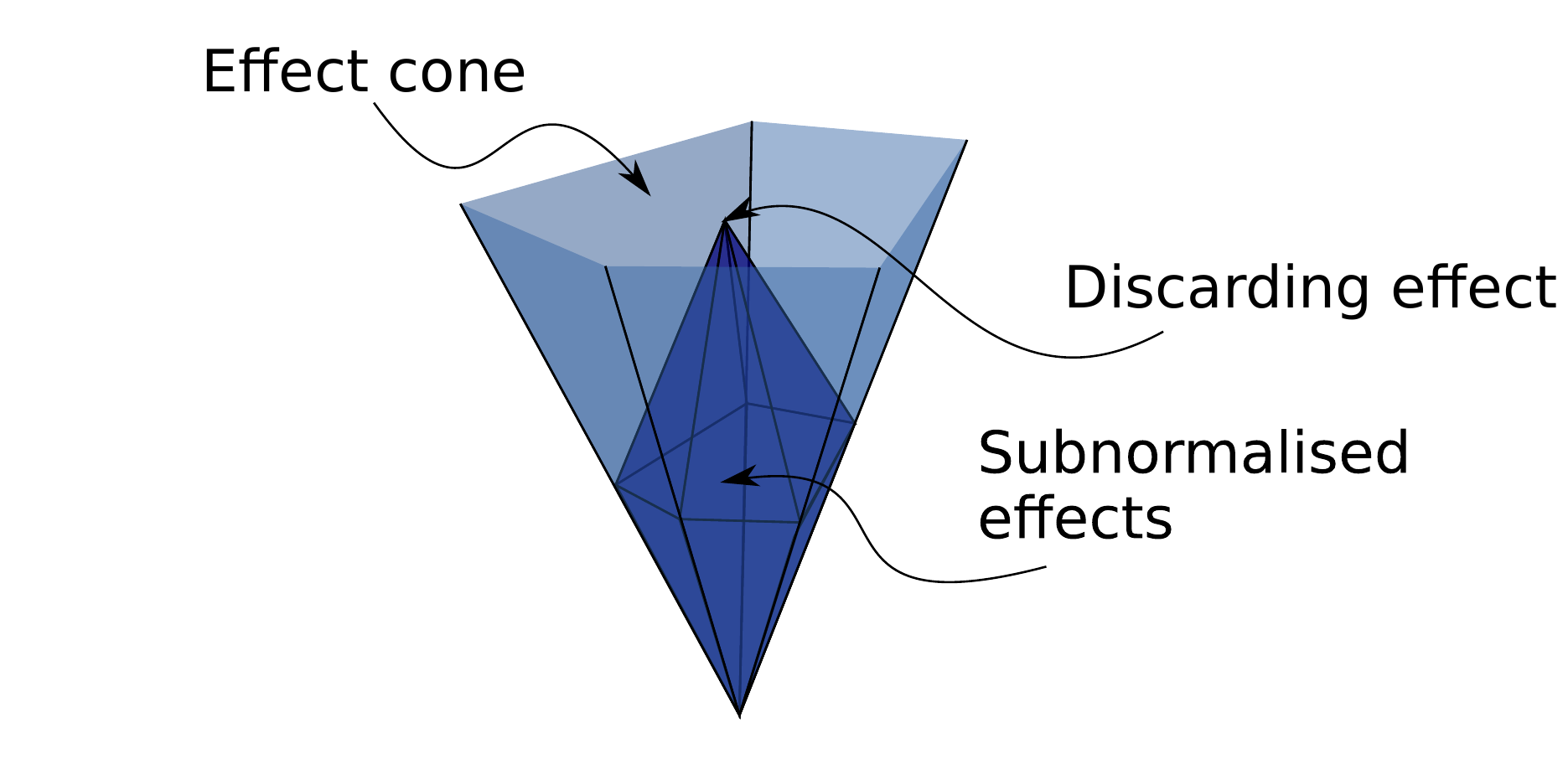}
\eeq

Formally, the effect space of $S$ lives inside the dual of the vector space of states, $V_S^*$, and consists of a convex cone of effects, $\mathcal{T}_S$, which is closed, pointed and full dimensional. 
The unique ``normalised'' effect (the \emph{discarding effect}) $\discard$,  is the unique linear functional that evaluates to $1$ on the intersecting hyperplane defining the normalised states. 
This must be in the interior of the effect cone such that it is an order unit for the cone. 
That is, we have that every effect $e$ in the cone can be rescaled to an effect $\alpha e$, for some $\alpha > 0$, such that there exists some $e'$ in the cone which satisfies $\alpha e + e'= \discard$. The set of subnormalised effects, $\mathcal{E}_S$, can be defined as those that satisfy this condition for some $\alpha \geq 1$. 
In particular, this ensures that the convex set of subnormalised effects spans the dual space and that $\discard$ is in the interior of the effect cone. 

\subsection{Physical transformations} 
Finally, we turn to our main focus which is the geometry of transformations within a tomographically local GPT. Schematically this looks like:
\beq
\includegraphics[clip, trim=0.25cm 16cm 1cm 4cm,scale=.5]{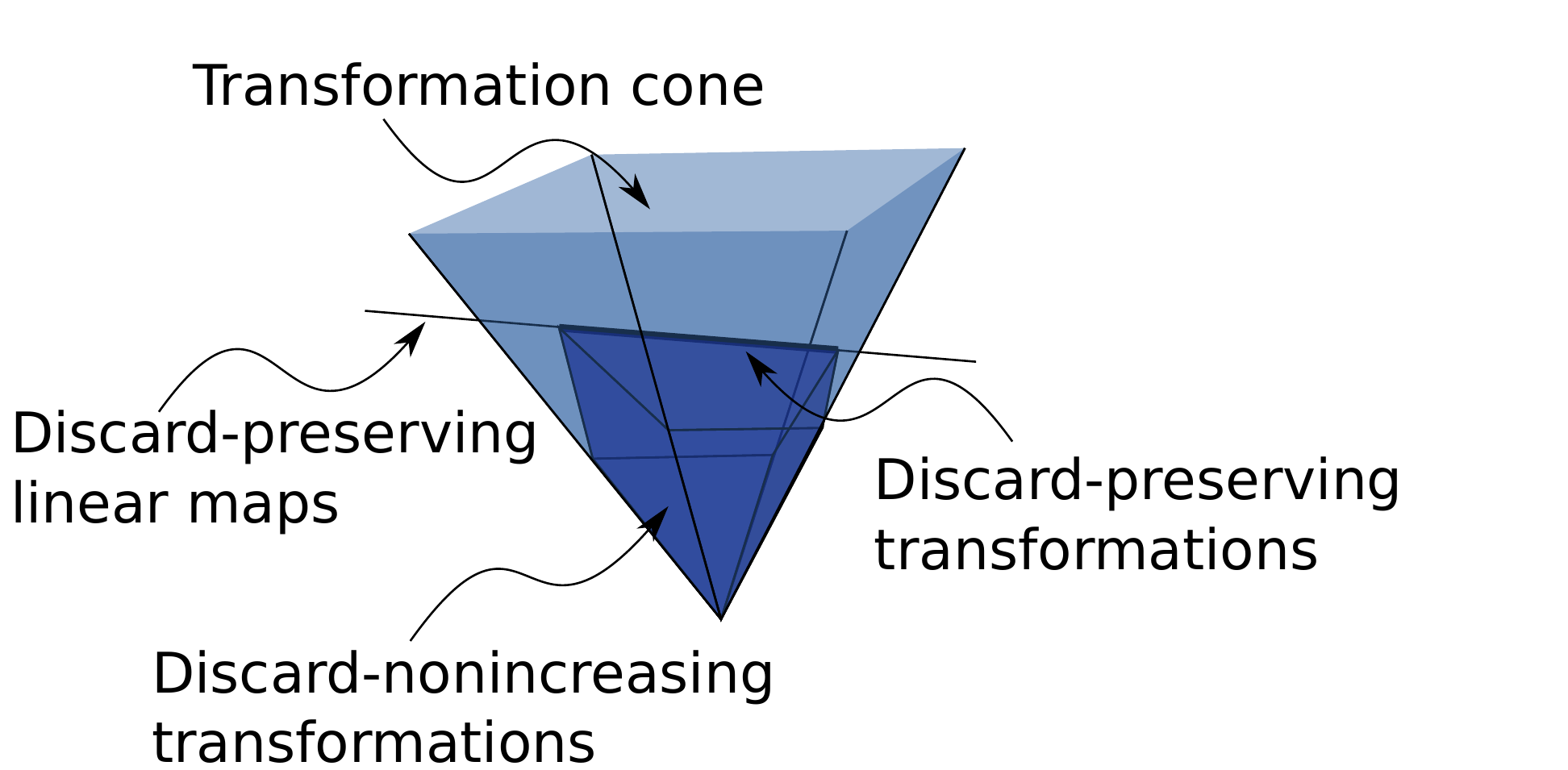}
\eeq

As we are assuming tomographic locality, the transformations from $S$ to $T$ live inside the vector space of linear maps from $V_S$ to $V_T$, which we denote as $\LST$. 
The geometric picture that we present here is not as standard in the literature as it is for the state and effect cases, and so we  now explain how this structure arises. 

In this picture we have a convex set of normalised transformations which are defined by the intersection of an affine  set (namely, the discard-preserving linear maps) and a convex cone (namely, the cone of transformations, $\TST$). 
We can then view this as a positive cone such that the discard-nonincreasing transformations are those that are ``underneath'' the discard-preserving transformations in the associated partial order. 

As there exists a set of states which span $V_T$ and a set of effects which span $V_S^*$, then, using the fact that $\mathcal{L}(V_S,V_T) \cong V_S^* \otimes V_T$, we have that 
\beq 
\LST \cong \lspan \left\{ 
\InputIfFileExists{Diagrams/effectstate.tikz}{}{\input{./figures/Diagrams/effectstate.tikz}}\ \middle| \ s \in \Omega_T, e \in \mathcal{T}_S
\right\}. 
\eeq 
This means that any linear map in $\LST$ can be written as 
\beq
\InputIfFileExists{Diagrams/effectstate3.tikz}{}{\input{./figures/Diagrams/effectstate3.tikz}} 
\eeq
for some finite values of $k$ and $k'$, where $s_1, \ldots, s_k, s'_1, \ldots, s'_{k'}$ are normalised states, and $e_1, \ldots, e_k, e'_1, \ldots, e'_{k'}$ are in the effect cone. 

With this in mind, we define the following convex subcone $\PMcone$ of the cone of transformations $\TST$ which is useful in our analysis:
\beq \label{Kcone}
\PMcone := \left\{ \sum_{i=1}^k
\InputIfFileExists{Diagrams/effectstate4.tikz}{}{\input{./figures/Diagrams/effectstate4.tikz}} \middle| s_i \in  \Omega_T, e_i \in \mathcal{T}_S, k \text{ finite} \right\} \subseteq \TST \subset \, \LST. 
\eeq 
Equipped with this definition, we can express $\LST$ neatly as 
\beq 
\LST = \PMcone - \PMcone  := \{ \phi_1 - \phi_2 | \phi_1, \phi_2 \in \PMcone \}, 
\eeq  
which means that $\PMcone$ spans $\LST$.

\subsection{Measure-and-prepare transformations and discard-preserving channels} 

Of particular interest is the set of physical transformations referred to as \emph{measure-and-prepare}, which we denote as $\PM$:
\beq
\PM := \left\{ 
\InputIfFileExists{Diagrams/effectstate2.tikz}{}{\input{./figures/Diagrams/effectstate2.tikz}} \middle| s_1, \ldots, s_k \in \Omega_T, e_1, \ldots, e_k \in \mathcal{E}_S, \sum_{i=1}^k e_i = \discard, k \text{ finite} \right\},  
\eeq 
recalling that $\discard$ is the discarding effect. 

Note that since these are physically possible in any GPT, they are a subset\footnote{In quantum theory, for example, these are a proper subset of all quantum channels and are known as \emph{entanglement-breaking channels.}} of the valid transformations, that is, they live inside the convex cone $\mathcal{T}_S^T$ and, in fact, $\PM \subseteq \PMcone$. 

A measure-and-prepare transformation $\phi\in \PM$ has the additional property of being discard-preserving:
\beq
\InputIfFileExists{Diagrams/causalphi.tikz}{}{\input{./figures/Diagrams/causalphi.tikz}}\quad =\quad \sum_{i=1}^k %
\InputIfFileExists{Diagrams/MPDiscard.tikz}{}{\input{./figures/Diagrams/MPDiscard.tikz}}\quad =\quad \sum_{i=1}^k %
\begin{tikzpicture}
	\begin{pgfonlayer}{nodelayer}
		\node [style=copoint] (38) at (0, 0.25) {$e_i$};
		\node [style=none] (39) at (0, -0.75) {};
		\node [style=right label] (42) at (0, -0.75) {$S$};
	\end{pgfonlayer}
	\begin{pgfonlayer}{edgelayer}
		\draw [qWire] (38) to (39.center);
	\end{pgfonlayer}
\end{tikzpicture}
}\quad =\quad %
\begin{tikzpicture}
	\begin{pgfonlayer}{nodelayer}
		\node [style=none] (38) at (0, 0) {};
		\node [style=none] (39) at (0, -0.75) {};
		\node [style=right label] (42) at (0, -0.75) {$S$};
		\node [style=upground] (43) at (0, 0.25) {};
	\end{pgfonlayer}
	\begin{pgfonlayer}{edgelayer}
		\draw [qWire] (38.center) to (39.center);
	\end{pgfonlayer}
\end{tikzpicture}
}.
\eeq

We denote the set of discard-preserving linear maps as $\DP$ and define it formally as 
\beq 
\DP := \left\{ %
\InputIfFileExists{Diagrams/DP.tikz}{}{\input{./figures/Diagrams/DP.tikz}}\ \middle| 
} = %
\InputIfFileExists{Diagrams/causal1.tikz}{}{\input{./figures/Diagrams/causal1.tikz}} 
\right\}. 
\eeq
Note that the set of discard-preserving maps forms an affine space (see Appendix~\ref{AppBatman} for definitions relating to ``affine'' concepts), which is easily proven given the definition above. 

Note that $\DP$ may contain non-physical transformations. 
However, from the above discussion, it does contain the measure-and-prepare transformations. 
Neatly, we have that $\PM \subseteq \DP \cap \PMcone$. 
Perhaps surprisingly, this containment is not strict, as shown in Lemma \ref{lem:newlabel} (see the Appendix).

The sets $\DP$, $\PM$, and $\PMcone$ allow us to get a useful characterization of the discard-preserving linear maps, which we now discuss.

\subsection{A useful characterization of discard-preserving linear maps}

The following theorem characterizes the set of discard-preserving maps in terms of those that are also measure-and-prepare. 

\begin{theorem}\label{theorem:AffSpanTransf}
Any discard-preserving linear map can be written as an affine combination of measure-and-prepare transformations and any affine combination of measure-and-prepare transformations is a discard-preserving linear map. More formally,

\begin{equation}
    \DP = \Aff(\PM),
\end{equation}
where $\Aff$ denotes the affine hull operation.
\end{theorem}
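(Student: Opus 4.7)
The forward direction, $\Aff(\PM) \subseteq \DP$, is immediate. Every measure-and-prepare map is discard-preserving (as computed just before the statement of the theorem), and discard-preservation is preserved by affine combinations: if $\sum_i \alpha_i = 1$ and each $\phi_i \in \DP$, then $\discard \circ \sum_i \alpha_i \phi_i = \left(\sum_i \alpha_i\right)\discard = \discard$.

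For the reverse inclusion, my plan is to take an arbitrary $f \in \DP$ and exhibit it as an affine combination of M\&P maps by exploiting the identity decomposition in Eq.~\eqref{eq:IdDecomp}. Applying that decomposition on the input wire of $f$ yields
\[ f \;=\; \sum_{i,j} [H^{-1}]_{ij}\, \big|\, f(\rho_i^S)\,\big\rangle\!\big\langle\, \xi_j^S\,\big|, \]
where $\{\rho_i^S\}$ are the fiducial states, $\{\xi_j^S\}$ are the fiducial effects (with $\sum_j \xi_j^S = \discard$ by Eq.~\eqref{eq:Norm}), and $[H^{-1}]$ denotes the hopping metric. Because $f$ is discard-preserving and each $\rho_i^S$ is normalised, each vector $f(\rho_i^S)$ lies in $\Omega_T$.

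The key observation is that, for every function $\sigma : \{1, \dots, N\} \to \{1, \dots, N\}$ assigning a fiducial-state index to each fiducial-effect index (with $N$ being the number of fiducial labels on $S$), the map $\phi_\sigma := \sum_j \big| f(\rho_{\sigma(j)}^S) \big\rangle\!\big\langle \xi_j^S \big|$ is itself a \emph{genuine} measure-and-prepare map in $\PM$: its effects $\{\xi_j^S\}$ still form a measurement, and its prepared states $f(\rho_{\sigma(j)}^S)$ are normalised. I would then set $\alpha_\sigma := \prod_j [H^{-1}]_{\sigma(j), j}$ and verify the two identities $\sum_\sigma \alpha_\sigma \phi_\sigma = f$ and $\sum_\sigma \alpha_\sigma = 1$ by a short combinatorial expansion. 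Both reduce to the single normalisation fact $\sum_i [H^{-1}]_{ij} = 1$ for every $j$, which is a consequence of Eq.~\eqref{eq:BlackNorm}: composing the black square with the classical discard on its output returns the classical discard on its input, so the hopping metric is ``column-stochastic''.

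The main obstacle is locating the correct combinatorial re-parametrisation. The decomposition from Eq.~\eqref{eq:IdDecomp} alone expresses $f$ only as a \emph{signed} sum of rank-one terms $\big|f(\rho_i^S)\big\rangle\!\big\langle\xi_j^S\big|$, which individually are not M\&P maps because a single fiducial effect $\xi_j^S$ does not sum to $\discard$. The trick is to re-index that signed sum as a weighted sum over the $N^N$ assignment functions $\sigma$: bundling all fiducial effects together inside each $\phi_\sigma$ restores a bona fide measurement, while the multiplicative form of $\alpha_\sigma$ simultaneously reproduces the required marginals $\sum_{\sigma : \sigma(j) = i} \alpha_\sigma = [H^{-1}]_{ij}$ and telescopes to $\sum_\sigma \alpha_\sigma = \prod_j \sum_i [H^{-1}]_{ij} = 1$ by column-stochasticity. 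Once this combinatorial picture is in place, both verifications are a one-line application of distributivity.
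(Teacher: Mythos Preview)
Your argument contains a genuine gap at the point where you assert that each $f(\rho_i^S)$ lies in $\Omega_T$. Discard-preservation only guarantees that $f(\rho_i^S)$ lands on the \emph{normalised hyperplane} $\{v\in V_T : \discard(v)=1\}$; it does not force membership in the state cone $\mathcal{T}^T$, hence not in $\Omega_T$. A generic $f\in\DP$ is merely a linear map with $\discard\circ f=\discard$ and need not send states to states---in quantum theory, for instance, the trace-preserving but non-positive qubit map $X\mapsto 2X-\tr(X)\tfrac{I}{2}$ sends $\ketbra{0}{0}$ to $\mathrm{diag}(3/2,-1/2)$. Consequently your $\phi_\sigma$'s are not in general measure-and-prepare transformations, and the proof as written only establishes the inclusion for genuine channels. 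This matters for the application: in the proof of Theorem~\ref{thm:main} the maps $x_i^\alpha$ to which Theorem~\ref{theorem:AffSpanTransf} is applied contain the black hopping metric and are therefore typically non-positive.

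The gap is repairable within your framework: since $\mathcal{T}^T$ is full-dimensional, $\Omega_T$ affinely spans the normalised hyperplane, so each $f(\rho_i^S)$ is itself an affine combination of genuine normalised states (e.g.\ the fiducial states of $T$), and a second layer of the same combinatorial re-indexing then finishes the job. The paper instead takes a non-constructive convex-geometric route: it shows $\PM=\DP\cap\PMcone$, exhibits an element of $\DP\cap\core(\PMcone)$ (namely the discard-and-reprepare map with an interior state), and invokes the general fact that an affine subspace meeting the core of a set coincides with the affine hull of their intersection. Once patched, your approach has the advantage of producing the affine weights explicitly.
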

We provide a proof of Theorem~\ref{theorem:AffSpanTransf} preceded by a background on convex geometry in Appendix~\ref{AppBatman}.

\section{A characterisation of no-signalling GPT channels}

Critical to our result is that of Ref.~\cite{al2013simulating}. In the duotensor formalism presented in the previous section, the result of Ref.~\cite{al2013simulating} is:
if we have some non-signalling stochastic map $S$ it can be written as an affine combination of product stochastic maps:
\beq
\InputIfFileExists{Diagrams/Proof1.tikz}{}{\input{./figures/Diagrams/Proof1.tikz}}\quad =\quad \sum_{\alpha \in A} q_\alpha\  %
\InputIfFileExists{Diagrams/Proof2.tikz}{}{\input{./figures/Diagrams/Proof2.tikz}}
 \eeq 
where $n$ is the number of input/output system, $q_\alpha \in \mathds{R}$, $\sum_{\alpha \in A} q_\alpha =1$, and the $s_i^\alpha$ are stochastic maps. That is the $q_\alpha$ define a quasiprobability distribution $q$ over the set $A$. We can therefore equivalently draw this as:
\beq
\InputIfFileExists{Diagrams/Proof1.tikz}{}{\input{./figures/Diagrams/Proof1.tikz}}\quad =\quad %
\InputIfFileExists{Diagrams/affineCombination.tikz}{}{\input{./figures/Diagrams/affineCombination.tikz}},
\eeq
where the white dot is the copy operation, the quasiprobability distribution $q$ is a black triangle because it is not physically realisable as it can have negative coefficients, and the $S_i$ are stochastic maps controlled by the variable $A$.

The duotensor formalism of Ref.~\cite{hardy2011reformulating}, together with the above understanding of the geometry of GPT transformations, allow us to easily lift this result to arbitrary no-signalling channels in arbitrary tomographically local GPTs.
\begin{theorem}\label{thm:main}
Any non-signalling GPT channel $C$ in a tomographcically local GPT $\mathcal{G}$, can be written as an affine combination of product channels.
\end{theorem}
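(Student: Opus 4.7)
The plan is to lift the decomposition result of Ref.~\cite{al2013simulating} for non-signalling stochastic maps to the GPT setting using the duotensor encoding, and then apply Theorem~\ref{theorem:AffSpanTransf} to replace each local factor with an affine combination of measure-and-prepare channels. First, I would map $C$ to its associated stochastic map $S$ via Eq.~\eqref{eq:themapGPTstoc}, composing with a fiducial preparation on each input wire and a fiducial measurement on each output wire. Because these fiducial operations act locally, one checks (generalising the bipartite computation shown just above the theorem) that every non-signalling constraint satisfied by $C$ is inherited by $S$. The result of Ref.~\cite{al2013simulating} then yields
\begin{equation}
S \;=\; \sum_{\alpha} q_\alpha \; S_1^\alpha \otimes \cdots \otimes S_n^\alpha,
\end{equation}
with $q_\alpha \in \mathds{R}$, $\sum_\alpha q_\alpha = 1$, and each $S_i^\alpha$ a stochastic map on the appropriate classical index set.

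Next, I would lift this back to the GPT level by inserting the identity decomposition of Eq.~\eqref{eq:IdDecomp} on every input and output wire of $C$. Since each inserted pair acts locally on a single wire, the tensor product structure of the decomposition of $S$ is preserved and one obtains
\begin{equation}
C \;=\; \sum_\alpha q_\alpha \; L_1^\alpha \otimes \cdots \otimes L_n^\alpha,
\end{equation}
where $L_i^\alpha$ is the stochastic map $S_i^\alpha$ sandwiched between the black hopping metrics for the relevant input and output systems. A short diagrammatic calculation using the black normalisation identities of Eq.~\eqref{eq:BlackNorm} together with the stochasticity of $S_i^\alpha$ shows that each $L_i^\alpha$ is discard-preserving. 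Invoking Theorem~\ref{theorem:AffSpanTransf}, I can then write $L_i^\alpha = \sum_{\beta} r^{i,\alpha}_{\beta}\, M_i^{\alpha,\beta}$ as an affine combination of measure-and-prepare channels; substituting back and distributing the tensor products yields
\begin{equation}
C \;=\; \sum_{\alpha,\beta_1,\ldots,\beta_n} q_\alpha \prod_i r^{i,\alpha}_{\beta_i} \; M_1^{\alpha,\beta_1} \otimes \cdots \otimes M_n^{\alpha,\beta_n},
\end{equation}
which is the sought affine combination of product channels: the combined coefficients sum to $\sum_\alpha q_\alpha = 1$, and each $M_i^{\alpha,\beta}$ is physically realisable.

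The main obstacle I anticipate is the intermediate claim that each $L_i^\alpha$ is discard-preserving. This is the step where the GPT machinery must come together: the normalisation properties of the fiducial preparation and measurement, the definition of the hopping metrics, and the crucial fact that the decomposition of Ref.~\cite{al2013simulating} produces honest stochastic factors rather than merely signed linear maps. Once this verification is in place, the remainder is bookkeeping: the duotensor formalism transports the decomposition locally, and Theorem~\ref{theorem:AffSpanTransf} converts the resulting sandwiched maps into the form required by the theorem.
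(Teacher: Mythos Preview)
Your proposal is correct and follows essentially the same route as the paper: encode $C$ as a non-signalling stochastic map via fiducials, apply the classical decomposition of Ref.~\cite{al2013simulating}, lift back through the identity decomposition of Eq.~\eqref{eq:IdDecomp}, verify that each resulting local factor (the paper's $x_i^\alpha$, your $L_i^\alpha$) is discard-preserving using Eq.~\eqref{eq:BlackNorm}, and then invoke Theorem~\ref{theorem:AffSpanTransf}. Your final bookkeeping via the multi-index sum is in fact more direct than the paper's diagrammatic reorganisation with copy and marginalisation maps; just note that your description of $L_i^\alpha$ as ``sandwiched between the black hopping metrics'' should also include the outer fiducial preparation and measurement so that $L_i^\alpha$ is genuinely a GPT-to-GPT map.
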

\proof
Consider some $n$-partite non-signalling channel $C$ in a tomographically local GPT:
\beq
\InputIfFileExists{Diagrams/GPTChannel.tikz}{}{\input{./figures/Diagrams/GPTChannel.tikz}}.
\eeq
By decomposing the input and output identities using Eq.~\eqref{eq:IdDecomp}, we obtain:
 \beq\label{eq:p1}
\InputIfFileExists{Diagrams/GPTChannel.tikz}{}{\input{./figures/Diagrams/GPTChannel.tikz}}\quad =\quad  %
\InputIfFileExists{Diagrams/GPTChannel3.tikz}{}{\input{./figures/Diagrams/GPTChannel3.tikz}}
\eeq
 We can then note that:
 \beq
\InputIfFileExists{Diagrams/Proof1.tikz}{}{\input{./figures/Diagrams/Proof1.tikz}}\quad :=\quad %
\InputIfFileExists{Diagrams/GPTChannel2.tikz}{}{\input{./figures/Diagrams/GPTChannel2.tikz}}
 \eeq
 is a non-signalling stochastic map, and hence we can apply the result of Ref.~\cite{al2013simulating} to obtain:
\beq \label{eq:DPDecomp}
\InputIfFileExists{Diagrams/Proof1.tikz}{}{\input{./figures/Diagrams/Proof1.tikz}}\quad  =\quad  \sum_{\alpha \in A} q_\alpha\  %
\InputIfFileExists{Diagrams/Proof2.tikz}{}{\input{./figures/Diagrams/Proof2.tikz}}
 \eeq 
 where $s_i^\alpha$ are stochastic maps, $q_\alpha \in \mathds{R}$, and $\sum_{\alpha \in A} q_\alpha =1$. By substituting this back in Eq.~\eqref{eq:p1}, we obtain:
\beq
\InputIfFileExists{Diagrams/GPTChannel.tikz}{}{\input{./figures/Diagrams/GPTChannel.tikz}}\quad  =\quad  \sum_{\alpha \in A} q_\alpha \ %
\InputIfFileExists{Diagrams/Proof3.tikz}{}{\input{./figures/Diagrams/Proof3.tikz}}\quad  =:\quad  \sum_{\alpha \in A} q_\alpha \ %
\InputIfFileExists{Diagrams/Proof4.tikz}{}{\input{./figures/Diagrams/Proof4.tikz}}
\eeq
where
\beq
\InputIfFileExists{Diagrams/Proof5.tikz}{}{\input{./figures/Diagrams/Proof5.tikz}}\quad  :=\quad  %
\InputIfFileExists{Diagrams/Proof6.tikz}{}{\input{./figures/Diagrams/Proof6.tikz}}.
\eeq
It is then easy to check that the $x_i^\alpha$ are discard preserving:
\begin{align}
\InputIfFileExists{Diagrams/Proof7.tikz}{}{\input{./figures/Diagrams/Proof7.tikz}}\quad  &=\quad %
\InputIfFileExists{Diagrams/Proof8.tikz}{}{\input{./figures/Diagrams/Proof8.tikz}}\quad =\quad %
\InputIfFileExists{Diagrams/Proof9.tikz}{}{\input{./figures/Diagrams/Proof9.tikz}}\quad\\ &=\quad %
\InputIfFileExists{Diagrams/Proof10.tikz}{}{\input{./figures/Diagrams/Proof10.tikz}}\quad =\quad %
\InputIfFileExists{Diagrams/Proof11.tikz}{}{\input{./figures/Diagrams/Proof11.tikz}}\quad =\quad %
\begin{tikzpicture}
	\begin{pgfonlayer}{nodelayer}
		\node [style=none] (31) at (-1.5, 0) {};
		\node [style=fMeas] (38) at (-2.25, 0) {};
		\node [style=none] (39) at (-2.25, -1.25) {};
		\node [style=rightground] (42) at (-1.25, 0) {};
	\end{pgfonlayer}
	\begin{pgfonlayer}{edgelayer}
		\draw [cWire] (38) to (31.center);
		\draw [qWire] (38) to (39.center);
	\end{pgfonlayer}
\end{tikzpicture}
}\quad\\ &=\quad %
\begin{tikzpicture}
	\begin{pgfonlayer}{nodelayer}
		\node [style=none] (38) at (-0.5, 0.25) {};
		\node [style=none] (39) at (-0.5, -0.75) {};
		\node [style=upground] (42) at (-0.5, 0.5) {};
	\end{pgfonlayer}
	\begin{pgfonlayer}{edgelayer}
		\draw [qWire] (38.center) to (39.center);
	\end{pgfonlayer}
\end{tikzpicture}
}\ .
\end{align}
We can then use Thm.~\ref{theorem:AffSpanTransf} to write each $x_i^\alpha$ as an affine combination of GPT channels:
\beq
\InputIfFileExists{Diagrams/Proof5.tikz}{}{\input{./figures/Diagrams/Proof5.tikz}}\quad  =\quad \sum_{\beta} r_{i}^{\alpha\beta}\  %
\InputIfFileExists{Diagrams/NProof1.tikz}{}{\input{./figures/Diagrams/NProof1.tikz}}.
\eeq
We can write this instead as:
\beq
\InputIfFileExists{Diagrams/Proof5.tikz}{}{\input{./figures/Diagrams/Proof5.tikz}}\quad  =\quad %
\InputIfFileExists{Diagrams/extra1.tikz}{}{\input{./figures/Diagrams/extra1.tikz}}
\eeq
where ${C'}_i^\alpha$ is a classically controlled channel and $R_i^\alpha$ is a quasidistribution.

Now, let us define:
\beq
\InputIfFileExists{Diagrams/extra2.tikz}{}{\input{./figures/Diagrams/extra2.tikz}}
\eeq
such that
\beq
\InputIfFileExists{Diagrams/extra3.tikz}{}{\input{./figures/Diagrams/extra3.tikz}}\quad =\quad %
\InputIfFileExists{Diagrams/extra1.tikz}{}{\input{./figures/Diagrams/extra1.tikz}},
\eeq
where ${C'}_i$ is a classically controlled channel and $R_i$ is a quasistochastic map.

Putting this together with Eq.~\eqref{eq:DPDecomp} we find that:
\begin{align}
\InputIfFileExists{Diagrams/GPTChannel.tikz}{}{\input{./figures/Diagrams/GPTChannel.tikz}}\quad &=\quad 
 \sum_{\alpha} q_\alpha\  %
\InputIfFileExists{Diagrams/Proof4.tikz}{}{\input{./figures/Diagrams/Proof4.tikz}} \\
&=\quad \sum_\alpha q_\alpha %
\InputIfFileExists{Diagrams/extra4.tikz}{}{\input{./figures/Diagrams/extra4.tikz}} \cdots %
\InputIfFileExists{Diagrams/extra5.tikz}{}{\input{./figures/Diagrams/extra5.tikz}} \\
&=\quad %
\InputIfFileExists{Diagrams/extra6.tikz}{}{\input{./figures/Diagrams/extra6.tikz}},\label{eq:Qpp}
\end{align}
where $Q'$ is the quasidistrubtion defined by the $q_\alpha$, i.e.:
\beq
\begin{tikzpicture}
	\begin{pgfonlayer}{nodelayer}
		\node [style=infpoint, fill=black] (78) at (-1.25, 0) {\color{white}$Q'$};
		\node [style=none] (79) at (-1, 0) {};
		\node [style=infcopoint] (80) at (0, 0) {$\alpha$};
	\end{pgfonlayer}
	\begin{pgfonlayer}{edgelayer}
		\draw [cWire] (79.center) to (80);
	\end{pgfonlayer}
\end{tikzpicture}
} \quad := \ q_\alpha
\eeq
for all $\alpha$.

Next, let us define $Q''$ by
\beq
\InputIfFileExists{Diagrams/newQuasi2.tikz}{}{\input{./figures/Diagrams/newQuasi2.tikz}}\quad:=\quad%
\InputIfFileExists{Diagrams/newQuasi1.tikz}{}{\input{./figures/Diagrams/newQuasi1.tikz}}
\eeq
such that we can now combine this with Eq.~\eqref{eq:Qpp} to write our channel as
\beq \label{eq:newlabel1}
\InputIfFileExists{Diagrams/GPTChannel.tikz}{}{\input{./figures/Diagrams/GPTChannel.tikz}}\quad =\quad %
\InputIfFileExists{Diagrams/extra7.tikz}{}{\input{./figures/Diagrams/extra7.tikz}}.
\eeq
Eq.~\ref{eq:newlabel1} gives us a quasiprobability distribution over a set of variables, one for each $C_i'$. In the remainder of this proof we show that this can be rewritten as a quasidistribution over a single variable, which is then copied to each of the $C_i'$'s. Diagrammatically, this means that the copy operation should be the last operation prior to the $C_i'$'s. It is then this quasidistribution over a single variable which defines our affine combination of product channels.

Now, define ``all but system $i$'' marginalisation maps, $D_i$ as:
\beq
\InputIfFileExists{Diagrams/extra8.tikz}{}{\input{./figures/Diagrams/extra8.tikz}}\quad =\quad %
\InputIfFileExists{Diagrams/extra9.tikz}{}{\input{./figures/Diagrams/extra9.tikz}},
\eeq
where the case $i\neq 1$ follows similarly.

We can then write:
\begin{align}
\InputIfFileExists{Diagrams/GPTChannel.tikz}{}{\input{./figures/Diagrams/GPTChannel.tikz}} 
\quad &=\quad  %
\InputIfFileExists{Diagrams/extra7.tikz}{}{\input{./figures/Diagrams/extra7.tikz}} \\
&=\quad  %
\InputIfFileExists{Diagrams/extra10.tikz}{}{\input{./figures/Diagrams/extra10.tikz}} \\
&=\quad  %
\InputIfFileExists{Diagrams/extra11.tikz}{}{\input{./figures/Diagrams/extra11.tikz}} \\
&=\quad  %
\InputIfFileExists{Diagrams/extra14.tikz}{}{\input{./figures/Diagrams/extra14.tikz}}\\
 &= \quad %
\InputIfFileExists{Diagrams/extra12.tikz}{}{\input{./figures/Diagrams/extra12.tikz}},
\end{align}
where in the last step we have simply merged together parallel wires into a single composite wire, whilst using Eq.~\eqref{eq:compCopy} to write the composite of copies as a copy of the composite.

By decomposing the quasidistribution $Q''$ we can equivalently write this as:
\beq
\InputIfFileExists{Diagrams/GPTChannel.tikz}{}{\input{./figures/Diagrams/GPTChannel.tikz}}\quad  =\quad  \sum_\beta {q_\beta''} \ %
\InputIfFileExists{Diagrams/extra13.tikz}{}{\input{./figures/Diagrams/extra13.tikz}},
\eeq
where $c_i^\beta$ are GPT channels and ${q_\beta''}$ is a quasidistribution. That is, any no-signalling GPT channel can be written as an affine combination of product GPT channels.
\endproof

Note that, if we have a GPT, such as quantum theory, in which one can always reversibly encode classical data into a GPT system, then we can rewrite this as:\begin{align}
\InputIfFileExists{Diagrams/GPTChannel.tikz}{}{\input{./figures/Diagrams/GPTChannel.tikz}}\quad  &=\quad  %
\InputIfFileExists{Diagrams/affineCominationGPT.tikz}{}{\input{./figures/Diagrams/affineCominationGPT.tikz}}\\
 &= \quad  %
\InputIfFileExists{Diagrams/affineCombinationGPT2.tikz}{}{\input{./figures/Diagrams/affineCombinationGPT2.tikz}} \\
 &= \quad  %
\InputIfFileExists{Diagrams/affineCombinationGPT3.tikz}{}{\input{./figures/Diagrams/affineCombinationGPT3.tikz}}
\end{align}
where $E$ is the encoding map, $D$ the decoding map, $s_Q$ is some vector which is not necessarily a physical GPT state, and where the $C'_i$ are GPT channels.

\section{Outlook}

In this work we have provided a characterisation of multipartite non-signalling channels in arbitrary locally-tomographic theories: these channels can always be represented as affine combinations of local channels. In the case where the input and output system types are classical, i.e., where the channel is a multipartite non-signalling stochastic map, we recover the result of Ref.~\cite{al2013simulating}. In the case of bipartite non-signalling channels whose inputs and outputs are quantum systems, we in turn recover the results of Ref.~\cite{chiribella2013quantum}. 

Our proof technique highlights the usefulness of the duotensor formalism, and we hope this will motivate its use throughout the quantum community. In particular, we show how it can be used to lift properties of multipartite stochastic maps, to arbitrary tomographically local GPTs. This motivates the question as to which other properties of stochastic maps can be similarly lifted?

\section*{Acknowledgements}

PJC, JHS, and ABS acknowledge support by the Foundation for Polish Science (IRAP project, ICTQT, contract no.2018/MAB/5, co-financed by EU within Smart Growth Operational Programme). 
All of the diagrams within this manuscript were prepared using TikZit.

\section*{Note added} 
We are not aware of this result appearing elsewhere in the literature, but suspect that this might be the case.  
We are happy to give credit if such a result is brought to our attention and will update the manuscript accordingly. 
We do expect, however, that our approach is novel and may find other applications in future works.

\bibliographystyle{unsrt}
\bibliography{bibliography}

\begin{thebibliography}{10}

\bibitem{beckman2001causal}
David Beckman, Daniel Gottesman, M.~A. Nielsen, and John Preskill.
\newblock Causal and localizable quantum operations.
\newblock {\em Phys. Rev. A}, 64:052309, Oct 2001.

\bibitem{Bell64}
J.~S. Bell.
\newblock On the einstein podolsky rosen paradox.
\newblock {\em Physics Physique Fizika}, 1:195--200, Nov 1964.

\bibitem{einstein1935can}
Albert Einstein, Boris Podolsky, and Nathan Rosen.
\newblock Can quantum-mechanical description of physical reality be considered
  complete?
\newblock {\em Physical review}, 47(10):777, 1935.

\bibitem{schrodinger1935discussion}
Erwin Schr{\"o}dinger.
\newblock Discussion of probability relations between separated systems.
\newblock In {\em Mathematical Proceedings of the Cambridge Philosophical
  Society}, volume~31, pages 555--563. Cambridge University Press, 1935.

\bibitem{BellRev}
Nicolas Brunner, Daniel Cavalcanti, Stefano Pironio, Valerio Scarani, and
  Stephanie Wehner.
\newblock Bell nonlocality.
\newblock {\em Rev. Mod. Phys.}, 86:419--478, Apr 2014.

\bibitem{SteeRev}
Roope Uola, Ana C.~S. Costa, H.~Chau Nguyen, and Otfried G\"uhne.
\newblock Quantum steering.
\newblock {\em Rev. Mod. Phys.}, 92:015001, Mar 2020.

\bibitem{al2013simulating}
Sabri~W Al-Safi and Anthony~J Short.
\newblock Simulating all nonsignaling correlations via classical or quantum
  theory with negative probabilities.
\newblock {\em Physical Review Letters}, 111(17):170403, 2013.

\bibitem{chiribella2013quantum}
Giulio Chiribella, Giacomo~Mauro D'Ariano, Paolo Perinotti, and Benoit Valiron.
\newblock Quantum computations without definite causal structure.
\newblock {\em Physical Review A}, 88(2):022318, 2013.

\bibitem{hardy2001quantum}
Lucien Hardy.
\newblock Quantum theory from five reasonable axioms.
\newblock {\em arXiv preprint quant-ph/0101012}, 2001.

\bibitem{barrett2007gpt}
Jonathan Barrett.
\newblock Information processing in generalized probabilistic theories.
\newblock {\em Phys. Rev. A}, 75:032304, Mar 2007.

\bibitem{sikora2018simple}
Jamie Sikora and John Selby.
\newblock Simple proof of the impossibility of bit commitment in generalized
  probabilistic theories using cone programming.
\newblock {\em Physical Review A}, 97(4):042302, 2018.

\bibitem{selby2018make}
John~H Selby and Jamie Sikora.
\newblock How to make unforgeable money in generalised probabilistic theories.
\newblock {\em Quantum}, 2:103, 2018.

\bibitem{sikora2019impossibility}
Jamie Sikora and John~H Selby.
\newblock On the impossibility of coin-flipping in generalized probabilistic
  theories via discretizations of semi-infinite programs.
\newblock {\em arXiv preprint arXiv:1901.04876}, 2019.

\bibitem{barnum2008nonclassicality}
Howard Barnum, Oscar~CO Dahlsten, Matthew Leifer, and Ben Toner.
\newblock Nonclassicality without entanglement enables bit commitment.
\newblock In {\em Information Theory Workshop, 2008. ITW'08. IEEE}, pages
  386--390. IEEE, 2008.

\bibitem{lami2018ultimate}
Ludovico Lami, Carlos Palazuelos, and Andreas Winter.
\newblock Ultimate data hiding in quantum mechanics and beyond.
\newblock {\em Communications in Mathematical Physics}, 361(2):661--708, 2018.

\bibitem{barrett2005no}
Jonathan Barrett, Lucien Hardy, and Adrian Kent.
\newblock No signaling and quantum key distribution.
\newblock {\em Physical Review Letters}, 95(1):010503, 2005.

\bibitem{barrett2017computational}
Jonathan Barrett, Niel de~Beaudrap, Matty~J Hoban, and Ciar{\'a}n~M Lee.
\newblock The computational landscape of general physical theories.
\newblock {\em arXiv preprint arXiv:1702.08483}, 2017.

\bibitem{lee2015computation}
Ciar{\'a}n~M Lee and Jonathan Barrett.
\newblock Computation in generalised probabilisitic theories.
\newblock {\em New Journal of Physics}, 17(8):083001, 2015.

\bibitem{krumm2018quantum}
Marius Krumm and Markus~P M{\"u}ller.
\newblock Quantum computation is an island in theoryspace.
\newblock {\em arXiv preprint arXiv:1804.05736}, 2018.

\bibitem{lee2016generalised}
Ciar{\'a}n~M Lee and John~H Selby.
\newblock Generalised phase kick-back: the structure of computational
  algorithms from physical principles.
\newblock {\em New Journal of Physics}, 18(3):033023, 2016.

\bibitem{lee2016deriving}
Ciar{\'a}n~M Lee and John~H Selby.
\newblock Deriving {G}rover's lower bound from simple physical principles.
\newblock {\em New Journal of Physics}, 18(9):093047, 2016.

\bibitem{barnum2018oracles}
Howard Barnum, Ciar{\'a}n~M Lee, and John~H Selby.
\newblock Oracles and query lower bounds in generalised probabilistic theories.
\newblock {\em Foundations of physics}, 48(8):954--981, 2018.

\bibitem{muller2012structure}
Markus~P M{\"u}ller and Cozmin Ududec.
\newblock Structure of reversible computation determines the self-duality of
  quantum theory.
\newblock {\em Physical Review Letters}, 108(13):130401, 2012.

\bibitem{ududec2011three}
Cozmin Ududec, Howard Barnum, and Joseph Emerson.
\newblock Three slit experiments and the structure of quantum theory.
\newblock {\em Foundations of Physics}, 41(3):396--405, 2011.

\bibitem{lee2017higher}
Ciar{\'a}n~M Lee and John~H Selby.
\newblock Higher-order interference in extensions of quantum theory.
\newblock {\em Foundations of Physics}, 47(1):89--112, 2017.

\bibitem{garner2018interferometric}
Andrew~JP Garner.
\newblock Interferometric computation beyond quantum theory.
\newblock {\em Foundations of Physics}, 48(8):886--909, 2018.

\bibitem{barnum2014higher}
Howard Barnum, Markus~P M{\"u}ller, and Cozmin Ududec.
\newblock Higher-order interference and single-system postulates characterizing
  quantum theory.
\newblock {\em New Journal of Physics}, 16(12):123029, 2014.

\bibitem{dakic2014density}
B~Daki{\'c}, Tomasz Paterek, and {\v{C}}~Brukner.
\newblock Density cubes and higher-order interference theories.
\newblock {\em New Journal of Physics}, 16(2):023028, 2014.

\bibitem{barnum2017ruling}
Howard Barnum, Ciar{\'a}n~M Lee, Carlo~Maria Scandolo, and John~H Selby.
\newblock Ruling out higher-order interference from purity principles.
\newblock {\em Entropy}, 19(6):253, 2017.

\bibitem{horvat2020interference}
Sebastian Horvat and Borivoje Daki{\'c}.
\newblock Interference as an information-theoretic game.
\newblock {\em arXiv preprint arXiv:2003.12114}, 2020.

\bibitem{short2010entropy}
Anthony~J Short and Stephanie Wehner.
\newblock Entropy in general physical theories.
\newblock {\em New Journal of Physics}, 12(3):033023, 2010.

\bibitem{barnum2010entropy}
Howard Barnum, Jonathan Barrett, Lisa~Orloff Clark, Matthew Leifer, Robert
  Spekkens, Nicholas Stepanik, Alex Wilce, and Robin Wilke.
\newblock Entropy and information causality in general probabilistic theories.
\newblock {\em New Journal of Physics}, 12(3):033024, 2010.

\bibitem{chiribella2015entanglement}
Giulio Chiribella and Carlo~Maria Scandolo.
\newblock Entanglement and thermodynamics in general probabilistic theories.
\newblock {\em New Journal of Physics}, 17(10):103027, 2015.

\bibitem{kimura2016entropies}
Gen Kimura, Junji Ishiguro, and Makoto Fukui.
\newblock Entropies in general probabilistic theories and their application to
  the holevo bound.
\newblock {\em Physical Review A}, 94(4):042113, 2016.

\bibitem{chiribella2017microcanonical}
Giulio Chiribella and Carlo~Maria Scandolo.
\newblock Microcanonical thermodynamics in general physical theories.
\newblock {\em New Journal of Physics}, 19(12):123043, 2017.

\bibitem{barnum2015entropy}
Howard Barnum, Jonathan Barrett, Marius Krumm, and Markus~P M{\"u}ller.
\newblock Entropy, majorization and thermodynamics in general probabilistic
  theories.
\newblock {\em arXiv preprint arXiv:1508.03107}, 2015.

\bibitem{schmid2020structure}
David Schmid, John~H Selby, Matthew~F Pusey, and Robert~W Spekkens.
\newblock A structure theorem for generalized-noncontextual ontological models.
\newblock {\em arXiv preprint arXiv:2005.07161}, 2020.

\bibitem{schmid2019characterization}
David Schmid, John Selby, Elie Wolfe, Ravi Kunjwal, and Robert~W Spekkens.
\newblock The characterization of noncontextuality in the framework of
  generalized probabilistic theories.
\newblock {\em arXiv preprint arXiv:1911.10386}, 2019.

\bibitem{shahandeh2019contextuality}
Farid Shahandeh.
\newblock Contextuality of general probabilistic theories and the power of a
  single resource.
\newblock {\em arXiv preprint arXiv:1911.11059}, 2019.

\bibitem{chiribella2014measurement}
Giulio Chiribella and Xiao Yuan.
\newblock Measurement sharpness cuts nonlocality and contextuality in every
  physical theory.
\newblock {\em arXiv preprint arXiv:1404.3348}, 2014.

\bibitem{gross2010all}
David Gross, Markus M{\"u}ller, Roger Colbeck, and Oscar~CO Dahlsten.
\newblock All reversible dynamics in maximally nonlocal theories are trivial.
\newblock {\em Physical review letters}, 104(8):080402, 2010.

\bibitem{czekaj2018bell}
\L. Czekaj, M.~Horodecki, and T.~Tylec.
\newblock Bell measurement ruling out supraquantum correlations.
\newblock {\em Phys. Rev. A}, 98:032117, Sep 2018.

\bibitem{barnum2010local}
Howard Barnum, Salman Beigi, Sergio Boixo, Matthew~B Elliott, and Stephanie
  Wehner.
\newblock Local quantum measurement and no-signaling imply quantum
  correlations.
\newblock {\em Physical Review Letters}, 104(14):140401, 2010.

\bibitem{czekaj2020correlations}
{\L}ukasz Czekaj, Ana~Bel{\'e}n Sainz, John Selby, and Micha{\l} Horodecki.
\newblock Correlations constrained by composite measurements.
\newblock {\em arXiv preprint arXiv:2009.04994}, 2020.

\bibitem{henson2014theory}
Joe Henson, Raymond Lal, and Matthew~F Pusey.
\newblock Theory-independent limits on correlations from generalized bayesian
  networks.
\newblock {\em New Journal of Physics}, 16(11):113043, 2014.

\bibitem{weilenmann2020analysing}
Mirjam Weilenmann and Roger Colbeck.
\newblock Analysing causal structures in generalised probabilistic theories.
\newblock {\em Quantum}, 4:236, 2020.

\bibitem{lami2018non}
Ludovico Lami.
\newblock Non-classical correlations in quantum mechanics and beyond.
\newblock {\em arXiv preprint arXiv:1803.02902}, 2018.

\bibitem{barnum2013ensemble}
Howard Barnum, Carl~Philipp Gaebler, and Alexander Wilce.
\newblock Ensemble steering, weak self-duality, and the structure of
  probabilistic theories.
\newblock {\em Foundations of Physics}, 43(12):1411--1427, 2013.

\bibitem{plavala2017conditions}
Martin Pl{\'a}vala.
\newblock Conditions for the compatibility of channels in general probabilistic
  theory and their connection to steering and bell nonlocality.
\newblock {\em Physical Review A}, 96(5):052127, 2017.

\bibitem{banik2015measurement}
Manik Banik.
\newblock Measurement incompatibility and
  schr{\"o}dinger-einstein-podolsky-rosen steering in a class of probabilistic
  theories.
\newblock {\em Journal of Mathematical Physics}, 56(5):052101, 2015.

\bibitem{cavalcanti2021witworld}
Paulo~J Cavalcanti, John~H Selby, Jamie Sikora, Thomas~D Galley, and
  Ana~Bel{\'e}n Sainz.
\newblock Witworld: A generalised probabilistic theory featuring post-quantum
  steering.
\newblock {\em arXiv preprint arXiv:2102.06581}, 2021.

\bibitem{richens2017entanglement}
Jonathan~G Richens, John~H Selby, and Sabri~W Al-Safi.
\newblock Entanglement is necessary for emergent classicality in all physical
  theories.
\newblock {\em Physical Review Letters}, 119(8):080503, 2017.

\bibitem{lee2018no}
Ciar{\'a}n~M Lee and John~H Selby.
\newblock A no-go theorem for theories that decohere to quantum mechanics.
\newblock {\em Proceedings of the Royal Society A: Mathematical, Physical and
  Engineering Sciences}, 474(2214):20170732, 2018.

\bibitem{scandolo2018possible}
Carlo~Maria Scandolo, Roberto Salazar, Jaros{\l}aw~K Korbicz, and Pawe{\l}
  Horodecki.
\newblock Is it possible to be objective in every physical theory?
\newblock {\em arXiv preprint arXiv:1805.12126}, 2018.

\bibitem{selby2017leaks}
John Selby and Bob Coecke.
\newblock Leaks: quantum, classical, intermediate and more.
\newblock {\em Entropy}, 19(4):174, 2017.

\bibitem{bae2016structure}
Joonwoo Bae, Dai-Gyoung Kim, and Leong-Chuan Kwek.
\newblock Structure of optimal state discrimination in generalized
  probabilistic theories.
\newblock {\em Entropy}, 18(2):39, 2016.

\bibitem{barrett2007information}
Jonathan Barrett.
\newblock Information processing in generalized probabilistic theories.
\newblock {\em Physical Review A}, 75(3):032304, 2007.

\bibitem{JP17}
Anna Jen\v{c}ov\'{a} and Martin Pl\'{a}vala.
\newblock Conditions on the existence of maximally incompatible two-outcome
  measurements in general probabilistic theory.
\newblock {\em Physical Review A}, 96:022113, 2017.

\bibitem{barnum2007generalized}
Howard Barnum, Jonathan Barrett, Matthew Leifer, and Alexander Wilce.
\newblock Generalized no-broadcasting theorem.
\newblock {\em Physical Review letters}, 99(24):240501, 2007.

\bibitem{barnum2011information}
Howard Barnum and Alexander Wilce.
\newblock Information processing in convex operational theories.
\newblock {\em Electronic Notes in Theoretical Computer Science}, 270(1):3--15,
  2011.

\bibitem{barnum2012teleportation}
Howard Barnum, Jonathan Barrett, Matthew Leifer, and Alexander Wilce.
\newblock Teleportation in general probabilistic theories.
\newblock In {\em Proceedings of Symposia in Applied Mathematics}, volume~71,
  pages 25--48, 2012.

\bibitem{heinosaari2019no}
Teiko Heinosaari, Leevi Lepp{\"a}j{\"a}rvi, and Martin Pl{\'a}vala.
\newblock No-free-information principle in general probabilistic theories.
\newblock {\em Quantum}, 3:157, 2019.

\bibitem{plavala2016all}
Martin Pl{\'a}vala.
\newblock All measurements in a probabilistic theory are compatible if and only
  if the state space is a simplex.
\newblock {\em Physical Review A}, 94(4):042108, 2016.

\bibitem{jenvcova2018incompatible}
Anna Jen{\v{c}}ov{\'a}.
\newblock Incompatible measurements in a class of general probabilistic
  theories.
\newblock {\em Physical Review A}, 98(1):012133, 2018.

\bibitem{filippov2017necessary}
Sergey~N Filippov, Teiko Heinosaari, and Leevi Lepp{\"a}j{\"a}rvi.
\newblock Necessary condition for incompatibility of observables in general
  probabilistic theories.
\newblock {\em Physical Review A}, 95(3):032127, 2017.

\bibitem{kuramochi2020compatibility}
Yui Kuramochi.
\newblock Compatibility of any pair of 2-outcome measurements characterizes the
  choquet simplex.
\newblock {\em Positivity}, 24(5):1479--1486, 2020.

\bibitem{bluhm2020incompatibility}
Andreas Bluhm, Anna Jen{\v{c}}ov{\'a}, and Ion Nechita.
\newblock Incompatibility in general probabilistic theories, generalized
  spectrahedra, and tensor norms.
\newblock {\em arXiv preprint arXiv:2011.06497}, 2020.

\bibitem{dahlsten2014uncertainty}
Oscar~CO Dahlsten, Andrew~JP Garner, and Vlatko Vedral.
\newblock The uncertainty principle enables non-classical dynamics in an
  interferometer.
\newblock {\em Nature communications}, 5(1):1--7, 2014.

\bibitem{takakura2020preparation}
Ryo Takakura and Takayuki Miyadera.
\newblock Preparation uncertainty implies measurement uncertainty in a class of
  generalized probabilistic theories.
\newblock {\em Journal of Mathematical Physics}, 61(8):082203, 2020.

\bibitem{takakura2021entropic}
Ryo Takakura and Takayuki Miyadera.
\newblock Entropic uncertainty relations in a class of generalized
  probabilistic theories.
\newblock {\em Journal of Physics A: Mathematical and Theoretical}, 2021.

\bibitem{sun2020no}
Liang~Liang Sun, Xiang Zhou, Leong~Chuan Kwek, and Sixia Yu.
\newblock No disturbance without uncertainty under generalized probability
  theory.
\newblock 2020.

\bibitem{masanes2019measurement}
Llu{\'\i}s Masanes, Thomas~D Galley, and Markus~P M{\"u}ller.
\newblock The measurement postulates of quantum mechanics are operationally
  redundant.
\newblock {\em Nature Communications}, 10(1):1--6, 2019.

\bibitem{galley2017classification}
Thomas~D Galley and Lluis Masanes.
\newblock Classification of all alternatives to the {Born }rule in terms of
  informational properties.
\newblock {\em Quantum}, 1:15, 2017.

\bibitem{galley2018any}
Thomas~D Galley and Lluis Masanes.
\newblock Any modification of the {Born} rule leads to a violation of the
  purification and local tomography principles.
\newblock {\em Quantum}, 2:104, 2018.

\bibitem{galley2020dynamics}
Thomas~D Galley and Lluis Masanes.
\newblock How dynamics constrains probabilities in general probabilistic
  theories.
\newblock {\em arXiv preprint arXiv:2002.05088}, 2020.

\bibitem{chiribella2011informational}
Giulio Chiribella, Giacomo~Mauro D’Ariano, and Paolo Perinotti.
\newblock Informational derivation of quantum theory.
\newblock {\em Physical Review A}, 84(1):012311, 2011.

\bibitem{chiribella2014dilation}
Giulio Chiribella.
\newblock Dilation of states and processes in operational-probabilistic
  theories.
\newblock {\em arXiv preprint arXiv:1412.8539}, 2014.

\bibitem{chiribella2014distinguishability}
Giulio Chiribella.
\newblock Distinguishability and copiability of programs in general process
  theories.
\newblock {\em arXiv preprint arXiv:1411.3035}, 2014.

\bibitem{barnum2014local}
Howard Barnum and Alexander Wilce.
\newblock Local tomography and the jordan structure of quantum theory.
\newblock {\em Foundations of Physics}, 44(2):192--212, 2014.

\bibitem{branford2018defining}
Dominic Branford, Oscar~CO Dahlsten, and Andrew~JP Garner.
\newblock On defining the hamiltonian beyond quantum theory.
\newblock {\em Foundations of Physics}, 48(8):982--1006, 2018.

\bibitem{wilce2009four}
Alexander Wilce.
\newblock Four and a half axioms for finite dimensional quantum mechanics.
\newblock {\em arXiv preprint arXiv:0912.5530}, 2009.

\bibitem{wilce2018royal}
Alexander Wilce.
\newblock A royal road to quantum theory (or thereabouts).
\newblock {\em Entropy}, 20(4):227, 2018.

\bibitem{barnum2016composites}
Howard Barnum, Matthew~A Graydon, and Alexander Wilce.
\newblock Composites and categories of {Euclidean Jordan} algebras.
\newblock {\em arXiv preprint arXiv:1606.09331}, 2016.

\bibitem{barnum2013symmetry}
Howard Barnum, Ross Duncan, and Alexander Wilce.
\newblock Symmetry, compact closure and dagger compactness for categories of
  convex operational models.
\newblock {\em Journal of Philosophical Logic}, 42(3):501--523, 2013.

\bibitem{wilce2011symmetry}
Alexander Wilce.
\newblock Symmetry and composition in probabilistic theories.
\newblock {\em Electronic Notes in Theoretical Computer Science},
  270(2):191--207, 2011.

\bibitem{masanes2011derivation}
Llu{\'\i}s Masanes and Markus~P M{\"u}ller.
\newblock A derivation of quantum theory from physical requirements.
\newblock {\em New Journal of Physics}, 13(6):063001, 2011.

\bibitem{masanes2013existence}
Llu{\'\i}s Masanes, Markus~P M{\"u}ller, Remigiusz Augusiak, and David
  P{\'e}rez-Garc{\'\i}a.
\newblock Existence of an information unit as a postulate of quantum theory.
\newblock {\em Proceedings of the National Academy of Sciences},
  110(41):16373--16377, 2013.

\bibitem{mueller2013three}
Markus~P M{\"u}ller and Lluis Masanes.
\newblock Three-dimensionality of space and the quantum bit: an
  information-theoretic approach.
\newblock {\em New Journal of Physics}, 15(5):053040, 2013.

\bibitem{plavala2021general}
Martin Pl{\'a}vala.
\newblock General probabilistic theories: An introduction.
\newblock {\em arXiv preprint arXiv:2103.07469}, 2021.

\bibitem{muller2021probabilistic}
Markus M{\"u}ller.
\newblock Probabilistic theories and reconstructions of quantum theory.
\newblock {\em SciPost Physics Lecture Notes}, page 028, 2021.

\bibitem{hardy2011reformulating}
Lucien Hardy.
\newblock Reformulating and reconstructing quantum theory.
\newblock {\em arXiv preprint arXiv:1104.2066}, 2011.

\bibitem{van2017quantum}
John van~de Wetering.
\newblock Quantum theory is a quasi-stochastic process theory.
\newblock {\em arXiv preprint arXiv:1704.08525}, 2017.

\bibitem{chiribella2010probabilistic}
Giulio Chiribella, Giacomo~Mauro D'Ariano, and Paolo Perinotti.
\newblock Probabilistic theories with purification.
\newblock {\em Physical Review A}, 81(6):062348, 2010.

\bibitem{coecke2016terminality}
Bob Coecke.
\newblock Terminality implies no-signalling... and much more than that.
\newblock {\em New Generation Computing}, 34(1-2):69--85, 2016.

\bibitem{kissinger2017equivalence}
Aleks Kissinger, Matty Hoban, and Bob Coecke.
\newblock Equivalence of relativistic causal structure and process terminality.
\newblock {\em arXiv preprint arXiv:1708.04118}, 2017.

\end{thebibliography}

\appendix

\section{A bit of convex geometry and a proof of Theorem~\ref{theorem:AffSpanTransf}}  
\label{AppBatman} 

This appendix aims to arrive at the proof of Theorem \ref{theorem:AffSpanTransf}. In order to do so, a background on convex geometry is provided, and the concepts presented are used to prove Lemmas \ref{lem:newlabel}-\ref{lem:keylemma}. Then, Lemmas \ref{lem:newlabel}, \ref{lem:intersection} and \ref{lem:keylemma} are directly used to prove Theorem \ref{theorem:AffSpanTransf}, while Lemma \ref{lastlemma} is used to prove Lemma \ref{lem:intersection}. All the sets $\DP$, $\PM$, and $\PMcone$ that are referred to here are defined in Section \ref{section:geometry}.

\subsection{Convex Geometry}
Given a real vector space $V$ and a (finite) set of vectors $v_1, \ldots, v_k \in V$, we define an \emph{affine combination} of those vectors to be of the form 
\beq 
\sum_{i=1}^k q_i v_i 
\eeq 
where $q_i \in \mathds{R}$ and $\sum_{i=1}^k q_i = 1$. 
Note the distinction between an affine combination and a convex combination, where the latter also requires that each $q_i$ is nonnegative.\footnote{This difference is analogous to the difference between quasiprobability distributions and (proper) probability distributions.}  
Given a set of vectors $C$, its \emph{affine hull} is the set of all affine combinations of vectors in $C$ and is denoted $\Aff(C)$. 
Lastly, if a set is equal to its affine hull, that is, it contains all of its affine combinations, then we say that the set is affine (or an affine space). 
Geometrically, one can view an affine space as a subspace translated by a fixed single vector. 

Conceptually, a convex set contains all the line segments between all pairs of points in the set. 
An affine set contains all the lines that extend beyond the endpoints of the line segments. 
This brings us to the definition of the \emph{core} of a set. 
Thinking of line segments, $x \in S$ is in the core of a set $S$ if for all $z \in V$, there exists a $t_z > 0$ such that $x + t z \in S$ for all $t \in [0, t_z]$.  
Conceptually, this means that given $x$, you can start drawing a line in \emph{any} direction and stay within the set $S$. 
Formally, 
\beq 
\core(S) := \{ x \in S | \forall z \in V, \exists t_z > 0, \text{ such that } x + t z \in S, \text{ for all } t \in [0, t_z] \}.  
\eeq 
  
\subsection{Lemmas and proof of Theorem \ref{theorem:AffSpanTransf}}
\begin{lemma} \label{lem:newlabel}
$\PM = \DP \cap \PMcone$. 
\end{lemma}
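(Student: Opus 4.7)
The plan is to prove both inclusions. For $\PM \subseteq \DP \cap \PMcone$, the inclusion $\PM \subseteq \PMcone$ is immediate because every subnormalised effect lies in the effect cone (i.e. $\mathcal{E}_S \subseteq \mathcal{T}_S$), so any decomposition witnessing membership in $\PM$ also witnesses membership in $\PMcone$. The inclusion $\PM \subseteq \DP$ is already established in the paragraph preceding Section~4.7 (the calculation showing that $\phi\in\PM$ is discard-preserving).

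The nontrivial direction is $\DP \cap \PMcone \subseteq \PM$. I would take an arbitrary $\phi \in \DP \cap \PMcone$ and use its membership in $\PMcone$ to fix \emph{some} finite decomposition $\phi = \sum_{i=1}^k \lvert s_i\rangle\langle e_i\rvert$ with $s_i \in \Omega_T$ normalised and $e_i \in \mathcal{T}_S$ merely in the effect cone. The goal is then to show that this very decomposition already satisfies the additional constraints required for $\phi$ to lie in $\PM$, namely that each $e_i$ is subnormalised and that $\sum_i e_i = \discard$.

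To get the second condition, I would compose $\phi$ with $\discard$ on the output and use discard-preservation ($\phi \in \DP$): since each $s_i$ is normalised, $\langle \discard \mid s_i\rangle = 1$ for every $i$, so the discard-preservation equation $\discard \circ \phi = \discard$ reduces exactly to $\sum_{i=1}^k e_i = \discard$. To get the first condition, I would then use the order structure on effects from Section~4.2: since each $e_j$ lies in $\mathcal{T}_S$ (is ``positive''), we have $\discard - e_i = \sum_{j\ne i} e_j \in \mathcal{T}_S$, which is precisely the statement that $e_i \le \discard$ in the cone order, hence $e_i \in \mathcal{E}_S$. Combining these observations shows that the given decomposition exhibits $\phi$ as an element of $\PM$.

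The only real subtlety, and the main step I expect to need the most care, is the passage from ``$e_i$ is in the effect cone and the $e_i$ sum to $\discard$'' to ``$e_i$ is subnormalised''. This is where the geometry described in Section~4.2 is used: because $\discard$ is an order unit for the effect cone and $\mathcal{E}_S$ coincides with the set of effects dominated by $\discard$, positivity together with the summation identity is enough to conclude subnormalisation without any further rescaling. Everything else is bookkeeping with the definitions of $\PM$, $\PMcone$, and $\DP$.
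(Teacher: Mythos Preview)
Your proposal is correct and follows essentially the same argument as the paper: fix a $\PMcone$-decomposition with normalised states, use discard-preservation to force $\sum_i e_i = \discard$, and then invoke the cone order from Section~\ref{section:effects} to conclude each $e_i\in\mathcal{E}_S$. If anything, you spell out the subnormalisation step ($\discard - e_i = \sum_{j\neq i} e_j \in \mathcal{T}_S$) more explicitly than the paper does, and your section reference for the $\PM\subseteq\DP$ calculation should point to the measure-and-prepare subsection rather than a nonexistent Section~4.7.
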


\begin{proof} 
Since $\PM \subseteq \DP \cap \PMcone$, all that remains to show is the opposite containment. 
Let $\phi \in \DP \cap \PMcone$ be a fixed, arbitrary vector. 
Since $\phi \in \PMcone$, we can write it as 
\beq 
\InputIfFileExists{Diagrams/K.tikz}{}{\input{./figures/Diagrams/K.tikz}}\quad =\quad \sum_{i=1}^k
\InputIfFileExists{Diagrams/effectstate4.tikz}{}{\input{./figures/Diagrams/effectstate4.tikz}}
\eeq  
where 
$k$ is finite, 
$s_1, \ldots, s_k\in \Omega_T$ are normalized states, and 
$e_1, \ldots, e_k\in \mathcal{T}_S$ are in the effect cone.

It remains to show that $e_1, \ldots, e_k$ sum to $\discard$. 
Since $\phi \in \DP$, we have that 
\beq
}\quad =\quad%
\InputIfFileExists{Diagrams/causalphi.tikz}{}{\input{./figures/Diagrams/causalphi.tikz}}\quad =\quad \sum_{i=1}^k %
\InputIfFileExists{Diagrams/MPDiscard.tikz}{}{\input{./figures/Diagrams/MPDiscard.tikz}}\quad =\quad \sum_{i=1}^k %
}.
\eeq
	since the states $s_i$ are normalised. This is the desired equality we seek.  Finally, note that by the partial order given in Section \ref{section:effects}, for all $ e_{i}$ in the sum above, we have $e_{i} \in \mathcal{E}_S$, so $ \phi$ satisfies all requirements for membership in $\PM$.
\end{proof}

\begin{lemma} \label{lastlemma}
Suppose $S$ is a set and 
\begin{equation} \label{otherKcone}
K = \left\{ \sum_{i=1}^n \alpha_i s_i \middle| \alpha_i \geq 0, s_i \in S, n \text{ finite} \right\}  
\end{equation} 
is a full-dimensional cone, i.e., $V = K - K$.  
	Suppose for $x \in K$, we have that for all $s \in S$, there exists $t_{s} > 0$ such that $x -  t s \in K$ for all $t \in [0, t_{s}]$. 
Then $x \in \core(K)$. 
\end{lemma}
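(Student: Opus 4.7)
The plan is to show directly that any vector $z \in V$ admits a positive $t_z > 0$ witnessing membership of the line segment in $K$, by decomposing $z$ via the assumption $V = K - K$ and then carefully subtracting a generator using the ``$S$-subtraction'' hypothesis.

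First, I would fix an arbitrary $z \in V$. Using full-dimensionality, write $z = k_1 - k_2$ with $k_1, k_2 \in K$. Then $x + t z = (x - t k_2) + t k_1$, and since $K$ is a cone closed under addition, it suffices to exhibit $t_z > 0$ such that $x - t k_2 \in K$ for all $t \in [0, t_z]$ (the summand $t k_1$ is automatically in $K$ for $t \geq 0$). If $k_2 = 0$ we are done with any $t_z > 0$, so assume $k_2 \neq 0$.

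Next, I would use the generating description \eqref{otherKcone} to write $k_2 = \sum_{i=1}^n \alpha_i s_i$ with $\alpha_i > 0$ and $s_i \in S$ (dropping any zero coefficients). Set $\alpha := \sum_i \alpha_i > 0$ and $\beta_i := \alpha_i / \alpha$, so that $\sum_i \beta_i = 1$. The key algebraic manipulation is to re-express the subtraction as a convex combination of ``single generator'' subtractions:
\begin{equation}
x - t k_2 \;=\; \sum_{i=1}^n \beta_i\, x \;-\; t \sum_{i=1}^n \beta_i (\alpha s_i) \;=\; \sum_{i=1}^n \beta_i \bigl( x - t \alpha s_i \bigr).
\end{equation}
By hypothesis, for each $s_i$ there exists $t_{s_i} > 0$ with $x - \tau s_i \in K$ for all $\tau \in [0, t_{s_i}]$. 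Taking $t_z := \min_i t_{s_i} / \alpha > 0$, for any $t \in [0, t_z]$ we have $t\alpha \in [0, t_{s_i}]$ for every $i$, so each $x - t\alpha s_i$ lies in $K$. Since $K$ is a convex cone and the $\beta_i$ are nonnegative, the convex combination above lies in $K$.

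I do not expect a significant obstacle here: the only slightly nontrivial move is recognising that one can use $\sum_i \beta_i = 1$ to rewrite $x$ as $\sum_i \beta_i x$ and thereby distribute the subtraction across the generators of $k_2$. Once that trick is in hand, the positivity of $t_z$ follows from the finiteness of the sum defining $k_2$ (so that the minimum over the finite collection $\{t_{s_i}/\alpha\}$ is strictly positive), and the conclusion $x \in \core(K)$ is immediate from the definition of the core.
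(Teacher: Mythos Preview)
Your proof is correct and follows essentially the same route as the paper: decompose the arbitrary direction via $V=K-K$, reduce to subtracting an element of $K$, expand that element in generators $s_i$, and use the identity $x=\sum_i(\alpha_i/\alpha)x$ to write $x-tk_2$ as a nonnegative combination of terms $x-t\alpha s_i$, each of which lies in $K$ for small $t$ by hypothesis. The only cosmetic difference is that the paper carries the $tk_1$ term along throughout rather than disposing of it first, and parameterises the per-generator bounds slightly differently.
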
 

Note that the only difference between the definition of $\core(K)$ and the condition above is that the vectors in the statement above are not arbitrary but rather belong to a set which generates a full-dimensional cone. 

\begin{proof}[Proof of Lemma~\ref{lastlemma}]
	 Since $V = K-K$, for any arbitrary $v \in V$ we can write $v = y - z$ where $y, z \in K$.  
Then for $x \in K$, $t \geq 0$, we can write the following  
\begin{equation}
x + t v = x + t y - t z. 
\end{equation} 
	 So, for $x$ to be in $\core(K)$, it suffices to find a $t_{v} > 0$ such that $x + t v \in K$ for all $t \in [0, t_{v}]$. To do that, we have two cases to analyse, $z=0$ and $z \neq 0$.
	Note that if $z = 0$, $x + tv = x + t y \in K$ for all $t \geq 0$ since $x, y \in K$,  and $K$ is cone, so this case is trivial. 
Suppose $z \in K$ is nonzero, then we can write it as $\sum_i \alpha_i s_i$ where $\alpha_i > 0$ and $s_i \in S$ and the sum is finite. 
Then we have  
\begin{equation}
x + t v = x + t y - \sum_i \alpha_i t s_i. 
\end{equation} 
For brevity, define $a = \sum_i \alpha_i > 0$. 
By hypothesis, let $t_i > 0$ be such that 
\begin{equation} 
x - a t s_i \in K 
\end{equation} 
	 for all $t \in [0, t_i]$.
	This exists since $a$ is positive  and by assumption there is some $t_{i} ' = a t_{i}$ such that $x-t s_{i} \in K$ for all $t \in [0,t_{i} ']$. 

We now have 
\begin{align}
x + t v 
& = x + ty - tz \\ 
& = t y + x - \sum_i \alpha_i t s_i \\ 
& = t y + \frac{1}{a} \left( \sum_i \alpha_i x - \sum_i a \cdot \alpha_i t s_i \right) \\ 
& = t y + \frac{1}{a} \sum_i \alpha_i (x - a t s_i) 
\end{align} 
	which is in $K$ for all  $t \in [0, t_{v}]$ where $ t_{v} := \min_i \{ t_i \}$ (which is positive since there are finitely many indices $i$). 
This concludes the proof.  
\end{proof}

We use this particular case characterization of a core element to prove the following lemma which is helpful in our proof of Theorem~\ref{theorem:AffSpanTransf}.

\begin{lemma}\label{lem:intersection}
Let $\mu \in \intr(\mathcal{T}^T)$ be a normalised state.  
Then 
\beq
\InputIfFileExists{Diagrams/discardPrepare.tikz}{}{\input{./figures/Diagrams/discardPrepare.tikz}} \in \DP \cap \core(\PMcone), 
\eeq  
where, recall, $\discard \in \intr(E_S)$ is the discarding effect. 
\end{lemma}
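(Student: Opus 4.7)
The plan is to treat the two claims separately. Membership in $\DP$ is immediate: postcomposing the map with $\discard$ on the output produces the scalar $\discard(\mu) = 1$ (since $\mu$ is normalised), leaving just the discard effect on the input, which is exactly the discard-preserving condition. The substantive part is showing that the map lies in $\core(\PMcone)$, and for this I intend to invoke Lemma~\ref{lastlemma}.

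To apply the lemma, I would set $K = \PMcone$ and take the generating set $S = \{\, e \otimes s \mid e \in \mathcal{T}_S,\ s \in \Omega_T \,\}$, so that $\PMcone$ arises as in~\eqref{otherKcone}; the required full-dimensionality $\LST = \PMcone - \PMcone$ is already established in Section~\ref{section:geometry}. The remaining hypothesis reduces to: for every generator $e \otimes s \in S$, there exists $t_{e,s} > 0$ such that
\begin{equation}
\discard \otimes \mu \;-\; t\,(e \otimes s)\;\in\;\PMcone \qquad \text{for all } t \in [0, t_{e,s}].
\end{equation}

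The hard part, and what I expect to be the main obstacle, is producing this ``room to spare'' explicitly from the two interiority assumptions: $\mu \in \intr(\mathcal{T}^T)$ and $\discard \in \intr(\mathcal{T}_S)$ (recorded in Section~\ref{section:effects}). My plan is to exploit both at once through the algebraic identity
\begin{equation}
\discard \otimes \mu \;-\; t(e \otimes s)\;=\;(\lambda\,\discard - t\,e) \otimes s \;+\; \discard \otimes (\mu - \lambda s),
\end{equation}
for a suitably small constant $\lambda > 0$. Interiority of $\mu$ lets me choose $\lambda \in (0,1)$ with $\mu - \lambda s \in \mathcal{T}^T$, so that $\mu - \lambda s = (1-\lambda)\,s''$ for some normalised state $s'' \in \Omega_T$. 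Interiority of $\discard$ supplies an $\alpha > 0$ with $\discard - \alpha e$ still in the effect cone, and by convexity $\discard - \beta e \in \mathcal{T}_S$ for every $\beta \in [0,\alpha]$.

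Setting $t_{e,s} := \lambda\,\alpha$, for $t \in [0, t_{e,s}]$ the first summand becomes $\lambda\bigl(\discard - (t/\lambda)\,e\bigr) \otimes s$, which is a nonnegative multiple of a generator of $\PMcone$ because $t/\lambda \in [0,\alpha]$; the second summand is $(1-\lambda)\,\discard \otimes s''$, which is also a generator of $\PMcone$. Hence the sum lies in $\PMcone$, completing the hypothesis of Lemma~\ref{lastlemma} and yielding $\discard \otimes \mu \in \core(\PMcone)$ as required.
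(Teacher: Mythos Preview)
Your proof is correct and follows the same overall architecture as the paper's: both reduce to Lemma~\ref{lastlemma} with $K=\PMcone$ and its rank-one generating set, and both exploit the interiority of $\mu$ in $\mathcal{T}^T$ and of $\discard$ in $\mathcal{T}_S$ to find the required slack. The only difference is the algebraic split. The paper writes
\[
\discard\otimes\mu - t\,(e\otimes s)\;=\;\discard\otimes(\mu-\sqrt{t}\,s)\;+\;(\discard-\sqrt{t}\,e)\otimes(\sqrt{t}\,s),
\]
using $\sqrt{t}$ symmetrically on both factors, whereas you fix $\lambda$ first and write
\[
\discard\otimes\mu - t\,(e\otimes s)\;=\;(\lambda\,\discard - t\,e)\otimes s\;+\;\discard\otimes(\mu-\lambda s).
\]
Both decompositions do the same job; yours is arguably a bit cleaner since it avoids the square root and makes the dependence on $t$ linear in a single term, while the paper's version treats the state and effect perturbations more symmetrically. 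Neither buys anything substantive over the other.
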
 

\begin{proof} 
Clearly 
\beq
\InputIfFileExists{Diagrams/discardPrepare.tikz}{}{\input{./figures/Diagrams/discardPrepare.tikz}} \in \DP,
\eeq
as $\mu$ is normalised,
 so all that remains to show is that it is in 
 $\core(\PMcone)$. 

Define 
\begin{equation} 
S := \left\{ %
\InputIfFileExists{Diagrams/effectstate5.tikz}{}{\input{./figures/Diagrams/effectstate5.tikz}}\ \middle| \ s \in \mathcal{T}^T, e \in \mathcal{T}_S \right\}. 
\end{equation} 
	Since $\PMcone$ (as defined in Equation~(\ref{Kcone})) is the convex hull of $S$ (as defined in Equation~(\ref{otherKcone})) and $\LST = \PMcone - \PMcone$, by Lemma~\ref{lastlemma}, it suffices to show that for a fixed $s \in \mathcal{T}^T$ and $e \in \mathcal{T}_S$, there exists $\hat{t} > 0$ such that  
\beq
\InputIfFileExists{Diagrams/discardPrepare.tikz}{}{\input{./figures/Diagrams/discardPrepare.tikz}} - t %
\InputIfFileExists{Diagrams/effectstate5.tikz}{}{\input{./figures/Diagrams/effectstate5.tikz}} \in \PMcone
\eeq for all $t \in [0, \hat{t}]$.  
For $t > 0$, we can write  
\begin{align} 
\InputIfFileExists{Diagrams/discardPrepare.tikz}{}{\input{./figures/Diagrams/discardPrepare.tikz}} - t %
\InputIfFileExists{Diagrams/effectstate5.tikz}{}{\input{./figures/Diagrams/effectstate5.tikz}} 
\quad &= \quad %
\InputIfFileExists{Diagrams/discardPrepare.tikz}{}{\input{./figures/Diagrams/discardPrepare.tikz}}  - \sqrt{t} %
\InputIfFileExists{Diagrams/su.tikz}{}{\input{./figures/Diagrams/su.tikz}}  + \sqrt{t} %
\InputIfFileExists{Diagrams/su.tikz}{}{\input{./figures/Diagrams/su.tikz}} - t %
\InputIfFileExists{Diagrams/effectstate5.tikz}{}{\input{./figures/Diagrams/effectstate5.tikz}}  \\ 
& =\quad \left( %
\begin{tikzpicture}
	\begin{pgfonlayer}{nodelayer}
		\node [style=point] (0) at (0, 0) {$\mu$};
		\node [style=none] (3) at (0, 0.75) {};
		\node [style=right label] (5) at (0, 0.5) {$T$};
	\end{pgfonlayer}
	\begin{pgfonlayer}{edgelayer}
		\draw [qWire] (0) to (3.center);
	\end{pgfonlayer}
\end{tikzpicture}
} - \sqrt{t} %
\begin{tikzpicture}
	\begin{pgfonlayer}{nodelayer}
		\node [style=point] (0) at (0, 0) {$s$};
		\node [style=none] (3) at (0, 0.75) {};
		\node [style=right label] (5) at (0, 0.5) {$T$};
	\end{pgfonlayer}
	\begin{pgfonlayer}{edgelayer}
		\draw [qWire] (0) to (3.center);
	\end{pgfonlayer}
\end{tikzpicture}
} \right)  %
\begin{tikzpicture}
	\begin{pgfonlayer}{nodelayer}
		\node [style=none] (38) at (0, -0.25) {};
		\node [style=none] (39) at (0, -0.75) {};
		\node [style=right label] (42) at (0, -0.75) {$S$};
		\node [style=upground] (44) at (0, 0) {};
	\end{pgfonlayer}
	\begin{pgfonlayer}{edgelayer}
		\draw [qWire] (38.center) to (39.center);
	\end{pgfonlayer}
\end{tikzpicture}
} +  \sqrt{t} %
}  \left( %
} - \sqrt{t} %
\begin{tikzpicture}
	\begin{pgfonlayer}{nodelayer}
		\node [style=copoint] (1) at (0, 0) {$e$};
		\node [style=none] (2) at (0, -0.75) {};
		\node [style=right label] (4) at (0, -0.75) {$S$};
	\end{pgfonlayer}
	\begin{pgfonlayer}{edgelayer}
		\draw [qWire] (1) to (2.center);
	\end{pgfonlayer}
\end{tikzpicture}
}\right).    
\end{align} 
Note that 
\beq
} - \sqrt{t} %
} \ \ \in \ \mathcal{T}^T
\qquad \text{and}\qquad
} - \sqrt{t} %
}\ \ \in \ \mathcal{T}_S
\eeq
for all sufficiently small $t > 0$  as $\mu$ and $\discard$ are interior in their respective cones.  
Therefore, 
\beq
\InputIfFileExists{Diagrams/discardPrepare.tikz}{}{\input{./figures/Diagrams/discardPrepare.tikz}} - t %
\InputIfFileExists{Diagrams/effectstate5.tikz}{}{\input{./figures/Diagrams/effectstate5.tikz}} \in \PMcone
\eeq
 is in $\PMcone$ for all $t > 0$ sufficiently small.
This concludes the proof. 
\end{proof}

\begin{lemma} \label{lem:keylemma} 
Given a real vector space $V$, let $S \subseteq V$ be a subset and let $A \subseteq V$ be an affine space. 
If $\core(S) \cap A \neq \emptyset$, then $\Aff(S \cap A) = A$. 
\end{lemma}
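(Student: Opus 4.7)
The plan is to prove the two inclusions of $\Aff(S \cap A) = A$ separately, noting that the hypothesis $\core(S) \cap A \neq \emptyset$ is only needed for one of them.

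The inclusion $\Aff(S \cap A) \subseteq A$ is immediate: $S \cap A \subseteq A$, and since $A$ is an affine space it is closed under affine combinations, so every affine combination of elements of $S \cap A$ lies in $A$.

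For the reverse inclusion $A \subseteq \Aff(S \cap A)$, the strategy is to use the core hypothesis to produce a pivot point from which every element of $A$ can be reached by a two-term affine combination. Fix some $x \in \core(S) \cap A$, which exists by assumption, and let $a \in A$ be arbitrary. Set $z := a - x \in V$. Applying the definition of core in the direction $z$, there is some $t_0 > 0$ (any positive value inside the guaranteed interval will do) such that $y := x + t_0 z \in S$. Since $A$ is affine and contains both $x$ and $a$, the entire line $\{x + t z : t \in \mathds{R}\}$ lies in $A$; in particular $y \in A$, so $y \in S \cap A$. A direct computation then gives
\begin{equation}
a \;=\; x + z \;=\; \left(1 - \tfrac{1}{t_0}\right) x \;+\; \tfrac{1}{t_0}\, y,
\end{equation}
which is an affine combination of $x, y \in S \cap A$ (its coefficients sum to $1$). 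Hence $a \in \Aff(S \cap A)$, completing the proof.

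There is no significant obstacle to this argument; the content of the lemma is simply the interplay between the core hypothesis --- which guarantees freedom of movement from $x$ in \emph{every} direction of $V$ while staying inside $S$ --- and the affineness of $A$ --- which ensures that the line from $x$ through $a$ never leaves $A$. Combining these yields a two-point affine presentation of every $a \in A$ pivoted at $x$, which is actually stronger than the conclusion the lemma demands.
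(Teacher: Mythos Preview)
Your proof is correct and follows essentially the same approach as the paper's own argument: both pick $x \in \core(S) \cap A$, move a short distance along the line through the target point of $A$ to land at some $y \in S \cap A$, and then recover the target as a two-term affine combination of $x$ and $y$. The only differences are notational (your $a$ is the paper's $z$, your direction vector $z=a-x$ is written out explicitly rather than left as $z-x$).
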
 

Before diving into the proof, we explain the idea first since the proof is actually quite simple to picture geometrically, but the proof we give here is algebraic. 
We start with a point $x \in \core(S) \cap A$ and draw the line segment between that point to some other arbitrary fixed point $z \in A$. 
Since $x \in \core(S)$, there is a point on that line segment, call it $y$, such that it is still in $S$. 
And since $A$ is affine, $y$ is in $A$ as well. 
The proof concludes by noting that $z$ is an affine combination of $x$ and $y$. 
\beq
\includegraphics[clip, trim=1cm 13.5cm 3cm 8cm,scale=.7]{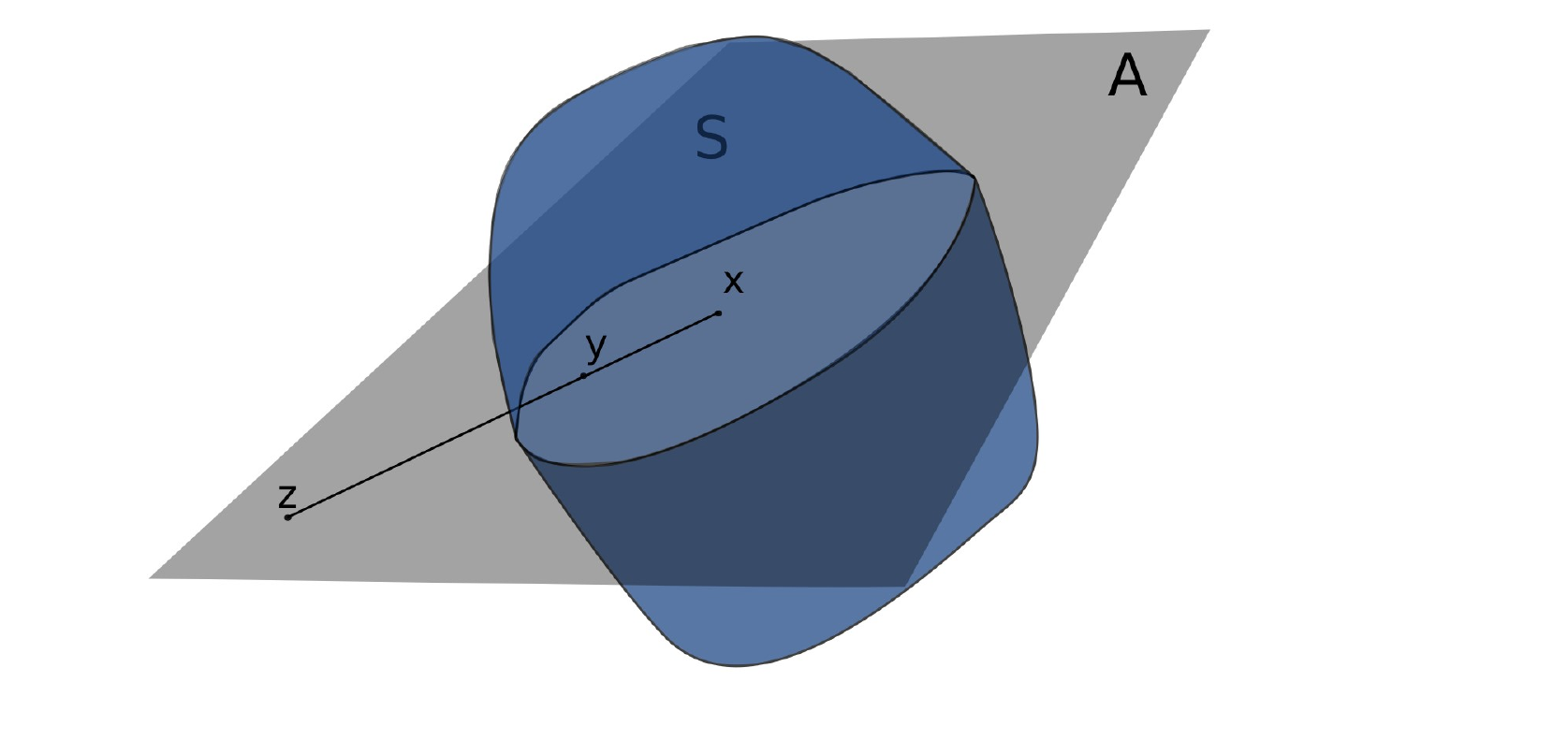}
\eeq

\begin{proof}[Proof of Lemma~\ref{lem:keylemma}]
Since $S \cap A \subseteq A$, we have that $\Aff(S \cap A) \subseteq \Aff(A) = A$. 
Therefore, all that remains to show is the reverse containment. 
To this end, let $z \in A$ be a fixed, arbitrary vector and let 
$x \in \core(S) \cap A$ (which exists by hypothesis). 
Define 
\beq 
y = (1-t) x + t z, 
\eeq 
for some $t > 0$ which we define momentarily.  
Notice that $y$ is an affine combination of $x$ and $z$, both of which are in $A$, and thus $y \in A$ as well. 
We now want to show that $y \in S$. 
Note that $y$ can be rewritten as 
\beq 
y = x + t (z-x).
\eeq 
Since $x \in \core(S)$, there exists a ${t} \in (0,1)$ such that $y \in S$. 
Thus, we have $x$ and $y$ both belonging to $S \cap A$.  
Notice that 
\beq
z = \left( \frac{1}{{t}} \right) y + \left( \frac{{t}-1}{{t}} \right) x. 
\eeq 
Since $z$ is an affine combination of $x$ and $y$, both belonging to $S \cap A$, the result follows. 
\end{proof} 

Lemmas 1-4 together allow us to write a short proof for Theorem \ref{theorem:AffSpanTransf}. 

\begin{proof}[\textbf{Proof of Theorem~\ref{theorem:AffSpanTransf}}]
	We want to show that $\DP = \Aff(\PM)$.

	By Lemma \ref{lem:newlabel}, we know
	\begin{equation}\label{}
		\PM = \DP \cap \PMcone.
	\end{equation}
	Now, Lemma \ref{lem:intersection} tells us that 
	\begin{equation}\label{}
		\DP \cap \core(\PMcone) \neq \emptyset. 
	\end{equation}
	If we set $A:=\DP$ and $S:=\PMcone$, the assumptions in Lemma \ref{lem:keylemma} are satisfied and we can use it to conclude that
	\begin{equation}\label{}
		\DP = \Aff( \PMcone \cap \DP ) = \Aff( \DP \cap \PMcone) = \Aff( \PM).
	\end{equation}
	
\end{proof}

\end{document}